\DeclareMathOperator{\pa}{pa}
\DeclareMathOperator{\mb}{mb}
\DeclareMathOperator{\doo}{do}
\DeclareMathOperator{\diedgeright}{\textcolor{black}{{\rightarrow}}}
\DeclareMathOperator{\diedgeleft}{\textcolor{black}{{\leftarrow}}}
\DeclareMathOperator{\biedge}{\textcolor{black}{{\leftrightarrow}}}
\DeclareMathOperator{\circright}{{{\circ\hspace{-0.15cm}\rightarrow}}}
\newcommand{\E}{\mathbb{E}}
\newcommand\ci{\perp\!\!\!\perp}
\newcommand{\G}{{\mathcal G}}
\newcommand{\I}{{\mathbb I}}
\newtheorem{lemma}{Lemma}
\newtheorem{theorem}{Theorem}
\newtheorem{corollary}{Corollary}[theorem]
\theoremstyle{remark}
\theoremstyle{definition}
\title{On Testability of the Front-Door Model via Verma  Constraints}
\author[1]{Rohit~Bhattacharya\vspace{-0.25cm}}
\author[2]{Razieh~Nabi}
\affil[$\text{\textcolor{white}{1}}$]{%
	\texttt{\hspace{1.1cm} rb17@williams.edu \hspace{2.6cm} razieh.nabi@emory.edu  \vspace{0.25cm}}
}
\affil[1]{%
	Department of Computer Science\\
	Williams College\\
	Williamstown, Massachusetts, USA
}
\affil[2]{%
	Department of Biostatistics and Bioinformatics\\
	Emory University\\
	Atlanta, Georgia, USA
}
\begin{document}
\maketitle

\begin{abstract}
	The front-door criterion can be used to identify and compute causal effects despite the existence of unmeasured confounders between a treatment and outcome. However, the key assumptions -- (i) the existence of a variable (or set of variables) that fully mediates the effect of the treatment on the outcome, and (ii) which simultaneously does not suffer from similar issues of confounding as the treatment-outcome pair -- are often deemed implausible. This paper explores the testability of these assumptions. We show that under mild conditions involving an auxiliary variable, the assumptions encoded in the front-door model (and simple extensions of it) may be tested via generalized equality constraints a.k.a Verma constraints. We propose two goodness-of-fit tests based on this observation, and evaluate the efficacy of our proposal on real and synthetic data. We also provide theoretical and empirical comparisons to instrumental variable approaches to handling unmeasured confounding.
\end{abstract}

\section{Introduction}
\label{sec:intro}

Adjustment on a set of observed covariates satisfying the backdoor condition \citep{pearl1995causal} is a common strategy for estimating causal effects from observational data. However, in many practical scenarios, it may be impossible to find covariates satisfying this condition due to the presence of one or more unmeasured confounders affecting both the treatment and  outcome. Two alternatives have received attention in the literature: (i) front-door adjustment \citep{pearl1995causal} and (ii) instrumental variable methods \citep{wright1928tariff, balke1993ivbounds, angrist1996identification}. Prior work has focused on proposing criteria to ensure reliability of effect estimates obtained from instrumental variable (IV) models e.g., via falsification of its assumptions \citep{pearl1995testability, wang2017falsification, finkelstein2021entropic}, or in special cases, confirmation in over-identified models; see \cite{kitagawa2015test} for an overview. In contrast, little attention has been given to proposing restrictions on the observed data that falsify or confirm the assumptions of the front-door model. Such criteria are important, as when front-door adjustment is possible, an analyst may prefer to use it over IV methods, which do not always yield point identification of the causal effect, or may impose extra restrictions (e.g., effect homogeneity) beyond the structural assumptions of the model. Empirical evaluations also suggest that front-door adjustment can recover reasonable estimates of causal effects in real-world settings where unmeasured confounding is to be expected \citep{glynn2013front, glynn2018front, bellemare2019paper}.

While front-door adjustment offers an appealing alternative in settings where standard covariate adjustment is not possible, several authors have cast doubt on whether the assumptions encoded by the  model are plausible in practice \citep{cox1995discussion, koller2009probabilistic, imbens2020potential}.  In this work, we aim to bridge the gap in testability of the front-door model. Our contributions can be summarized as follows: (i) We show that if a particular (well-known) generalized equality constraint a.k.a Verma constraint \citep{verma1990equivalence, spirtes2000causation} holds in the observed data distribution between an ``anchor'' variable and the outcome, it is sufficient to ensure that the assumptions of the front-door model are satisfied; (ii) We propose ways of testing this constraint with finite samples. The tests rely on variationally independent pieces of a natural parameterization of the observed likelihood, and have the appealing property that they require little additional modeling than what is typically used in inverse probability weighted estimators for the front-door functional proposed by \cite{fulcher2020robust} and \cite{bhattacharya2020semiparametric}. That is, models used to perform the test can be re-used for downstream causal effect estimation (if the test indicates it is ok to proceed); (iii) Finally, we provide theoretical and empirical comparisons between IV and front-door models. We show that our proposed criterion for testing the front-door assumptions can be combined with a simple conditional independence test that enables testing the validity of the anchor variable as an instrument. That is, we show it is possible to test an intersection model where both the IV and front-door conditions are met; we hope this opens  avenues for future research into combining estimates from the two adjustment strategies with certain robustness properties.

\textbf{Related work:} Works like \cite{maathuis2009estimating} and \cite{malinsky2017estimating} apply causal discovery methods (for systems with and without latent confounders respectively) to identify sets of variables that might satisfy the backdoor condition with respect to various treatment-outcome pairs. \cite{entner2013data} and \cite{shah2021finding} propose an ordinary independence criterion that uses an anchor variable to determine if a set of pre-treatment covariates satisfy the backdoor condition with respect to a given treatment-outcome pair (these techniques avoid running an entire causal discovery search procedure.) These works (and others regarding testing the validity of IVs cited in the introduction) are most similar to our own, except we define a criterion that uses a generalized equality constraint involving the anchor variable to determine whether a proposed set of mediators satisfy the front-door conditions. To our knowledge, the use of Verma constraints for this purpose has not been explored before. With regards to procedures for testing Verma constraints, one of the inverse weighting procedures we propose uses different pieces of the model likelihood than what is typically used in the phrasing of the constraint; the second procedure represents a stabilized version \citep{hernan2006estimating} of the usual weights used in Verma tests. Our methods also complement work by \cite{thams2021statistical} who proposed a weighted resampling scheme for producing pseudo-datasets that mimic a post-intervention distribution such that applying any (potentially non-parametric) conditional independence test to the pseudo-dataset amounts to testing the Verma constraint itself. That is, the  methods of weight generation we propose can be  plugged into the resampling schemes of \cite{thams2021statistical} to produce distinct non-parametric tests; we expand on this in future sections.

\section{Problem Setup \& Motivation}
\label{sec:motivation}

Consider a setting where the analyst is interested in computing the causal effect of smoking (treatment $A$) on developing coronary heart disease (outcome $Y$). A common target of interest to quantify such effects is the mean contrast in outcomes under two different (hypothetical) interventions. More formally, the \emph{average causal effect} (ACE) can be defined as the contrast $\E[Y | \doo(a)] - \E[Y | \doo(a')],$ where $\doo(\cdot)$ denotes an intervention \citep{pearl2009causality}. The ACE may be  identified as a function of observed data given sufficient restrictions on a causal model. For example, given a set of  covariates $C,$ the  ignorability assumption $Y(\doo(a)) \ci A | C$, along with positivity of the distribution $p(A | C)$ and consistency, yields identification of the ACE via the \emph{adjustment formula}: $\E[\E[Y |A=a,C] - \E[Y |A=a',C]].$

Often, the analyst is unable to obtain information on all relevant confounders. In the language of causal graphs, this corresponds to the existence of unmeasured variable(s) $U,$ such that structures of the form $A \diedgeleft U \diedgeright Y$ are present in the underlying hidden variable causal model (i.e., $U$ is a common cause of $A$ and $Y$). Such structures are often summarized via a bidirected edge $A \biedge Y$ in graphical representations of the observed margin of the model known as \emph{acyclic directed mixed graphs} (ADMGs). Simple covariate adjustment is insufficient to obtain unbiased estimates of the causal effect in such settings. However, \cite{pearl1995causal} showed that if one were able to obtain measurements on a mediator (or set of mediators) $M$ such that the causal structure shown in Fig.~\ref{fig:intro}(a) holds, then even if all common confounders are unobserved, the counterfactual mean is identified as the following functional of the observed data:

\vspace{-0.5cm}
{\small
	\begin{align}
		\!\!\! \E[Y | \doo(a)] = \sum_M p(M | A=a) \! \times  \sum_A p(A)\times \E[Y | A, M]. 
		\label{eq:fd-formula}
	\end{align}
}%
Fig.~\ref{fig:intro}(a) is known as the \emph{front-door model} and the corresponding functional is called the \emph{front-door formula}. In our motivating example, the analyst might posit hypertension as being the primary mediating variable by which smoking leads to increased risk of coronary heart disease. Though the front-door model allows for unmeasured confounding between $A$ and $Y,$ it encodes 2 key assumptions
\begin{enumerate}
	\item[{\bf (F1)}] An exclusion restriction implying $A$ affects $Y$ only via the mediators $M,$ i.e., the direct edge $A \diedgeright Y$ is absent.
	\item[{\bf (F2)}] No unmeasured confounding between the treatment-mediator and mediator-outcome pairs, i.e., the bidirected edges $A \biedge M$ and $M \biedge Y$ are absent.
\end{enumerate}
It is the absence of these edges that are typically questioned  in the literature -- e.g, one might be concerned that the same unmeasured variable $U$ that confounds the relation between smoking and heart disease, also confounds other relations involving smoking and hypertension, or hypertension and heart disease \citep{koller2009probabilistic, imbens2020potential}.

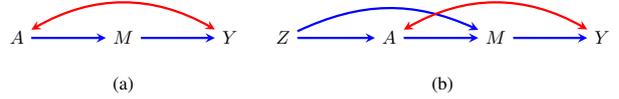
\begin{figure}[!t]
	\begin{center}
		\scalebox{0.7}{
			\begin{tikzpicture}[>=stealth, node distance=2cm]
				\tikzstyle{square} = [draw, thick, minimum size=1.0mm, inner sep=3pt]
				
				\begin{scope}[]
					\path[->, very thick]
					node[] (a) {$A$} 
					node[right of=a] (m) {$M$}
					node[right of=m] (y) {$Y$}
					
					(a) edge[blue] (m)
					(m) edge[blue] (y)
					(a) edge[<->, red, bend left] (y)
					
					node[below of=m, xshift=0cm, yshift=1.1cm] (a) {(a)} ;
				\end{scope}
				
				\begin{scope}[xshift=7.cm]
					\path[->, very thick]
					node[] (a) {$A$} 
					node[right of=a] (m) {$M$}
					node[right of=m] (y) {$Y$}
					node[left of=a] (z) {$Z$}
					
					(z) edge[blue] (a)
					(z) edge[blue, bend left=25] (m)
					(a) edge[blue] (m)
					(m) edge[blue] (y)
					(a) edge[<->, red, bend left] (y)
					
					node[below of=a, xshift=1cm, yshift=1.1cm] (b) {(b)} ;
				\end{scope}
				
			\end{tikzpicture}
		}
	\end{center}
	\vspace{-0.4cm}
	\caption{(a) The front-door model; (b) The front-door model with an anchor variable $Z$. }
	\label{fig:intro}
\end{figure}

Given information on just $A, M,$ and $Y,$ the conditions (F1) and (F2) are untestable, as the front-door model shown in Fig.~\ref{fig:intro}(a) imposes no restrictions on the observed distribution. However, consider the ADMG shown in Fig.~\ref{fig:intro}(b), where we incorporate information on an additional ``anchor'' variable $Z.$ Here, $Z$ is a common cause of both the treatment and the mediator, but does not directly cause $Y.$  In our example, the analyst may hypothesize prior history of hypertension as a candidate anchor variable. While the missing edge between $Z$ and $Y$ in Fig.~\ref{fig:intro}(b) does not correspond to an ordinary conditional independence (there are no independence facts implied by the model at all), it does encode a generalized equality constraint a.k.a Verma constraint. In particular, the model imposes a well-known restriction that the Markov kernel $q_Y(Y | M)\equiv \sum_A p(A| Z)\times p(Y| Z, A, M)$  is not a function of $Z$ \citep{robins1986new, verma1990equivalence, spirtes2000causation}. Alternatively, this constraint may  be viewed as a ``dormant'' independence  stating $Z \ci Y$ in a re-weighted distribution $p(Z, A, M, Y)/p(M|A,Z)$ which corresponds to the post-intervention distribution $p(Z, A, Y | \doo(m)).$ Markov kernels and their relation to post-intervention distributions are discussed in more detail in Section~\ref{sec:prelims}. 

Though different configurations of ADMGs (e.g., switching $Z\diedgeright M$ to $Z \biedge M,$ or deleting the edge entirely) may yield models with the same restriction, we show that all such configurations  share a common structure on the subgraph pertaining to $A, M,$ and $Y$ that satisfies the conditions (F1) and (F2). That is, any empirical test designed to check the Verma constraint (under mild assumptions formalized in Section~\ref{sec:criterion}), also serves to confirm whether the front-door conditions are true. Note that the identification functional for $\E[Y|\doo(a)]$ in Fig.~\ref{fig:intro}(b) is not precisely the same as the front-door formula in \eqref{eq:fd-formula}, but a slight generalization of it that allows for the inclusion of baseline covariates (which may be useful in many practical settings.) The theory we propose allows for testing of the front-door conditions as well as some general versions of it. However, for ease of exposition, we refer to these general versions as simply ``front-door.''\footnote{We will briefly note how the theory trivially extends when there are additional baseline covariates $C$ besides the anchor.} The corresponding identifying functional for $\E[Y|\doo(a)]$ in Fig.~\ref{fig:intro}(b) is \citep{tian2002general},

\vspace{-0.4cm}
{\small
	\begin{align}
		\sum_{Z, M} p(Z)\times p(M|a, Z) \times \sum_A p(A|Z)\times \E[Y|Z, A, M].
		\label{eq:fd-formula-general}
	\end{align}
}%

In Section~\ref{sec:finite-sample}, we show how inverse probability weighted estimators for the above functional described by \cite{fulcher2020robust} and \cite{bhattacharya2020semiparametric} can be adapted to design empirical tests for the Verma constraint and subsequent estimation of effects. Readers familiar with IV methods might  wonder whether the anchor $Z$ also satisfies the IV conditions. While in the case of Fig.~\ref{fig:intro}(b) it does not (the exclusion restriction that all causal paths from $Z$ to $Y$ must go through $A$ is not met), Section~\ref{sec:iv} discusses an intersection model where both IV and front-door conditions hold. 

\section{Causal Graphical Models}
\label{sec:prelims}

The causal model of a DAG $\G(V)$ defined over a set of variables $V$ can be understood as the set of distributions induced by a system of structural equations -- one equation for each vertex $V_i$ as a function of its ``parents'' $\pa_\G(V_i)$ and a noise term $\epsilon_i$ -- equipped with the $\doo(\cdot)$ operator \citep{pearl2009causality}. Typically, the noise terms in the system are assumed to be mutually independent, though this is not strictly necessary \citep{richardson2013single}. The criteria we describe are non-parametric in the sense that they do not rely on any extra distributional assumptions on the structural equations or noise terms. The system induces a joint distribution $p(V)$ over the observed variables that factorizes according to $\G(V)$ as follows: $p(V) = \prod_{V_i \in V} p(V_i \mid \pa_\G(V_i)).$ Further, counterfactual distributions arising from interventions on subsets of variables $A \subset V$, written as $p(V \setminus A \mid \doo(a)),$ are given by a truncated factorization, often referred to as the g-formula, where conditional factors for each $A_i \in A$ are dropped \citep{robins1986new, spirtes2000causation, pearl2009causality}.
\begin{align}
	p(V\setminus A \mid \doo(a)) &= \frac{\prod_{V_i \in V} p(V_i \mid \pa_\G(V_i))}{\prod_{A_i \in A} p(A_i \mid \pa_\G(A_i))}\bigg\vert_{A=a}. 
	\label{eq:fact_truncation}
\end{align}%

Often the analyst is unable to obtain measurements on all variables in the system. In such cases it may be inconvenient to work directly with the hidden variable causal DAG $\G(V\cup U),$ where $U$ is the set of unmeasured variables. A popular alternative  is to use  an ADMG $\G(V)$ consisting of directed ($\diedgeright$) and bidirected ($\biedge$) edges to model the observed data margin via a nested factorization of Markov kernels \citep{richardson2017nested}. The ADMG $\G(V)$ can be constructed from the DAG $\G(V \cup U)$ using the latent projection operation described by \cite{verma1990equivalence}. A directed edge $V_i \diedgeright V_j$ in $\G(V)$ maintains the usual causal interpretation; a bidirected edge $V_i \biedge V_j$ can be construed (wlog) as the presence of one or more unmeasured confounders $V_i \diedgeleft U_k \diedgeright V_j$ in the underlying hidden variable DAG $\G(V \cup U)$ \citep{evans2018margins}. The nested Markov factorization has the desired property that it preserves all non-parametric equality restrictions implied on the observed margin by the hidden variable DAG, and permits phrasing of causal identification algorithms on ADMGs without loss of generality \citep{evans2018margins, shpitser2006identification, richardson2017nested}. We now briefly describe this  factorization using conditional ADMGs (CADMGs); for a more detailed overview, see Appendix~\ref{app:nested}.

A CADMG $\G(V, W)$  is a special kind of ADMG used to describe post-intervention distributions where variables in $V$ are random, and those in $W$ are fixed to constants via intervention. The nested Markov factorization of an ADMG $\G(V)$ can then be described in terms of Markov kernels of the form $q_D(D|\pa_\G(D))$ where each set $D$ is a subset of $V$ that forms a bidirected connected component in a CADMG $\G(D, V\setminus D)$ representing a post-intervention distribution where all variables in $V\setminus D$ are fixed by intervention, and this distribution is identified from $p(V)$ via sequential application of the g-formula. Such a set $D$ is said to be an \emph{intrinsic set}. 

As an example, consider the ADMG in Fig.~\ref{fig:intro}(b). The post-intervention distribution $p(A, Y | \doo(z, m))$ is identified as $p(Z, A, M, Y)/\{p(Z)\times p(M|A, Z)\}$ -- the g-formula can be applied to fix $Z$ first and then $M,$ or vice-versa. The set $\{A, Y\}$ also forms a bidirected connected component in the corresponding CADMG shown in Fig.~\ref{fig:intro2}(a); thus,  it is intrinsic. The associated Markov kernel is $q_{AY}(A, Y | Z, M) \equiv p(A|Z)\times p(Y|Z, A, M)$, i.e.,  the functional obtained via sequential application of the g-formula to $Z$ and $M$. Given this description, the  list of all Markov kernels corresponding to intrinsic sets in Fig.~\ref{fig:intro}(b) is: 
\begin{align}
	&q_Z(Z) \equiv p(Z), \label{eq:intrinsic_kernels}  \\
	&q_A(A \mid Z) \equiv p(A \mid Z), \nonumber \\
	&q_M(M \mid A, Z) \equiv p(M \mid A, Z), 	\nonumber  \\
	&q_{AY}(A,Y \mid Z, M) \equiv p(A \mid Z) \times p(Y \mid Z, A, M), \nonumber \\
	&q_Y(Y \mid M)  \equiv \sum_A p(A \mid Z)\times p(Y \mid Z, A, M). \nonumber
\end{align}%

Let ${\cal D}(\G(V, W))$ denote the set of all bidirected connected components of random variables, commonly referred to as \emph{districts}, in the CADMG $\G(V, W).$ The nested Markov factorization states that the observed distribution $p(V)$ satisfies the following \emph{district factorization} wrt to the ADMG $\G(V)$: 
\begin{align}
	\!p(V) = \prod_{D \in {\cal D(\G)}} q_D(D\mid \pa_\G(D)), 
	\label{eq:fact_dist}
\end{align}%
where each kernel appearing in this factorization corresponds to intrinsic sets in $\G(V).$ In Fig.~\ref{fig:intro}(b), this implies: $p(V) = q_Z(Z) \times q_M(M|A,Z) \times q_{AY}(A, Y | Z, M).$ The nested factorization further asserts that any post-intervention distribution $p(V \setminus S | \doo(s))$ identified from  $p(V)$ satisfies the district factorization wrt to the corresponding CADMG $\G(V\setminus S, S),$ where again each kernel in the factorization corresponds to intrinsic sets \citep{richardson2017nested}.  

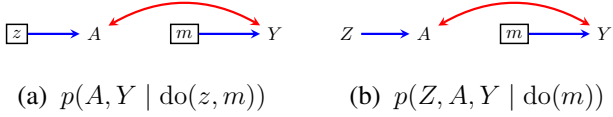
\begin{figure}[!t]
	\begin{center}
		\scalebox{0.7}{
			\begin{tikzpicture}[>=stealth, node distance=1.7cm]
				\tikzstyle{square} = [draw, thick, minimum size=1.0mm, inner sep=3pt]
				
				\begin{scope}[xshift=0cm]
					\path[->, very thick]
					node[] (a) {$A$} 
					node[right of=a, square] (m) {$m$}
					node[right of=m] (y) {$Y$}
					node[square, left of=a, xshift=0.25cm] (z) {$z$}
					
					(z) edge[blue] (a)
					(m) edge[blue] (y)
					(a) edge[<->, red, bend left] (y)
					
					node[below of=a, xshift=0.9cm, yshift=0.5cm] (a) {\Large (a) \ $p(A, Y \mid \doo(z, m))$} ;
				\end{scope}
				
				\begin{scope}[xshift=6.2cm]
					\path[->, very thick]
					node[] (a) {$A$} 
					node[right of=a, square] (m) {$m$}
					node[right of=m] (y) {$Y$}
					node[left of=a, xshift=0.25cm] (z) {$Z$}
					
					(z) edge[blue] (a)
					(m) edge[blue] (y)
					(a) edge[<->, red, bend left] (y)
					
					node[below of=a, xshift=1.cm, yshift=0.5cm] (b) {\Large (b) \ $p(Z, A, Y \mid \doo(m))$} ;
				\end{scope}
				
			\end{tikzpicture}
		}
	\end{center}
	\vspace{-0.3cm}
	\caption{Examples of CADMGs corresponding to the intervention distributions obtained from the ADMG in Fig.~\ref{fig:intro}(b). }
	\label{fig:intro2}
\end{figure}

Ordinary independence constraints implied by the nested Markov model of $p(V)$ can be read via an extension of the well-known d-separation criterion for DAGs that extends the notion of a collider to include structures of the form $\diedgeright \circ \biedge,$ $\biedge \circ \diedgeleft,$ and $\biedge \circ \biedge.$ Generalized independence constraints a.k.a Verma constraints can also be read via m-separation applied to CADMGs corresponding to post-intervention distributions formed via multiplication of intrinsic kernels. 

\section{Testability Of Front-Door}
\label{sec:criterion}

In this section we  prove a result on the testability of  front-door assumptions using a generalized equality constraint between the outcome $Y$ and  anchor variable $Z.$  Consider the ADMG in Fig.~\ref{fig:intro}(b), and the CADMG in Fig.~\ref{fig:intro2}(b) which corresponds to the post-intervention distribution $p(Z, A, Y | \doo(m))=q_Z(Z)\times q_{AY}(A, Y|Z, M  = m)$ (this is  derived by applying district factorization to the CADMG with intrinsic kernels defined in \eqref{eq:intrinsic_kernels}.) If we apply m-separation to the ADMG in Fig.~\ref{fig:intro}(b), we detect no ordinary independence constraints between $Z$ and $Y.$ However, applying m-separation  to the CADMG in Fig.~\ref{fig:intro2}(b), we see that $Z \ci Y$ in $p(Z, A, Y|\doo(m)).$  Alternatively, this constraint may be viewed as saying that the intrinsic kernel $q_Y(Y|M) = \sum_A q_{AY}(A, Y | Z, M)$ (compare the two kernels in \eqref{eq:intrinsic_kernels}) is not a function of $Z.$ Since intrinsic kernels always correspond to  post-intervention distributions that are identified from the observed distribution, this implies a testable restriction on the observed data distribution $p(Z, A, M, Y).$ This is an example of a dormant independence a.k.a Verma constraint \citep{shpitser2008dormant}.

Below, we formally define the concept of an anchor variable and assumptions under which the above  constraint can be used to empirically verify the front-door assumptions. 
\begin{enumerate}
	\item[{\bf (A1)}] $M$ is a mediator between $A$ and $Y.$
	
	\item[{\bf (A2)}] $Z$ is a covariate that is \emph{not} a causal consequence of $A$ such that $Z \not\ci A$ and $Z \not\ci Y \mid A, M.$
	
	\item[{\bf (A3)}] {A general version of \emph{faithfulness} (Verma faithfulness) stating that all non-parametric equality restrictions in distributions $p(V)$ that nested Markov factorize wrt an ADMG $\G(V)$ are due to its structure (ruling out coincidental cancellations in pathways for example.) That is, an ordinary independence in $p(V)$ implies m-separation in $\G(V)$, and a generalized independence in a post-intervention distribution (or kernel) obtained from $p(V)$ implies (i) identifiability of this post-intervention distribution given the structure of $\G$ and (ii) m-separation in the corresponding CADMG.}
	
\end{enumerate} 

We briefly provide justification and intuition for these assumptions (more details are in Appendix~\ref{app:verma-comms}.) (A1) simply requires that the analyst believes that $M$ in fact mediates the effect of $A$ on $Y,$ but does not impose any other restrictions implied by the front-door model (e.g., absence of a direct effect of $A$ on $Y$ or absence of confounding along the pathway through $M.$) (A2) is a ``relevance'' assumption that is automatically satisfied if there exists either $Z \diedgeright A$ or $Z\biedge A$ (or both) in conjunction with the edge $A \biedge Y.$ That is, the assumption is met when $Z$ directly affects or is confounded with $A,$ and $A$ and $Y$ share an unmeasured confounder (the primary motivation for applying front-door adjustment.) We define any variable $Z$ satisfying assumption (A2) to be an anchor variable. Similar definitions of an anchor are used by \cite{entner2013data} and \cite{shah2021finding} in the context of testing validity of covariate adjustment sets. Finally, (A3) subsumes the standard faithfulness assumption employed in causal discovery methods based on ordinary independence constraints by noting that such constraints do not rely on computation of post-intervention distributions. General versions of faithfulness, similar to (A3), are used in works like \cite{shpitser2014introduction} and \cite{bhattacharya2021differentiable} that incorporate Verma constraints into causal discovery. 

As noted in Section~\ref{sec:motivation}, the criterion we propose can be used to verify the front-door conditions and generalizations of it. Specifically, \cite{tian2002general} showed that the causal effect of $A$ on all other variables in an ADMG $\G(V)$ is identified if and only if $A$ has no bidirected path to any of its children; it is easy to confirm that this criterion includes the front-door model as a special case. We now formalize a result on the testability of this condition. 
\begin{theorem} 	\label{thm:fd}
	If the generalized equality constraint $Z \ci Y$ in $p(Z, A, M, Y)/p(M|A, Z)$ holds in some distribution $p(Z, A, M, Y)$ satisfying assumptions (A1-A3), then this distribution nested Markov factorizes wrt an ADMG where $A$ has no bidirected paths to its children.
\end{theorem}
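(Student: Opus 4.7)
The plan is to apply (A3) to the hypothesized Verma constraint in order to extract two structural statements about any ADMG $\G$ that nested Markov factorizes $p$, and then combine them with (A1) and (A2) to show that $\G$ must satisfy Tian's condition: $A$ has no bidirected path to any of its children.

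\textbf{Step 1 (structure from Verma faithfulness).} I first apply (A3) to the generalized equality $Z \ci Y$ in $p(V)/p(M \mid A, Z)$. Because this is a generalized independence in a candidate post-intervention kernel, (A3) forces two facts in $\G$: (i) the ratio $p(V)/p(M \mid A, Z)$ coincides with the identified post-intervention distribution $p(Z, A, Y \mid \doo(m))$, so $M$ is fixable in $\G$ with conditioning set exactly $\{A, Z\}$; and (ii) $Z$ is m-separated from $Y$ by $\emptyset$ in the corresponding CADMG $\G(\{Z,A,Y\}, \{m\})$ obtained by fixing $M$. Fact (i) immediately rules out $M \biedge Y$ and any longer bidirected path from $M$ to its descendant $Y$, since such structures would prevent $M$ from being fixable in $\G$.

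\textbf{Step 2 (edge exclusions from the CADMG m-separation).} In the CADMG obtained by fixing $M$, all arrowheads at $M$ are removed, so $Z$-to-$Y$ paths either pass through $A$ alone or are direct edges. Direct $Z \diedgeright Y$ and $Z \biedge Y$ edges survive unblocked and must therefore be absent. For paths through $A$: by (A2), $Z$ is not a descendant of $A$ so $A \diedgeright Z$ is impossible, and by (A1) and acyclicity $Y \diedgeright A$ is impossible; hence any $Z$-$A$-$Y$ path whose $Y$-side edge is $A \biedge Y$ makes $A$ a collider and is blocked under empty conditioning. The only potentially open $Z$-$A$-$Y$ paths therefore require the direct edge $A \diedgeright Y$. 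Combining fact (ii) with $Z \not\ci A$ from (A2), which (together with the exclusion of $Z \biedge Y$ above) forces at least one of $Z \diedgeright A$ or $Z \biedge A$ to be present in $\G$, then forbids $A \diedgeright Y$.

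\textbf{Step 3 (residual anchor dependence forces $A \biedge Y$ and forbids $A \biedge M$).} The condition $Z \not\ci Y \mid A, M$ from (A2) demands an open $Z$-to-$Y$ path in $\G$ given $\{A, M\}$. Any path ending with the edge $M \diedgeright Y$ hits $M$ as a non-collider and is blocked by conditioning on $M$, and the direct edges $A \diedgeright Y$, $Z \diedgeright Y$, $Z \biedge Y$ have been excluded, so the open path must end with $A \biedge Y$ where $A$ serves as a collider opened by conditioning on $A$. Therefore $A \biedge Y$ must be present in $\G$. But then fact (i) prohibits $A \biedge M$, because $M \biedge A \biedge Y$ would be a bidirected path from $M$ to its descendant $Y$, contradicting fixability of $M$; by the same argument any detour $A \biedge Z \biedge M$ is excluded, since combined with $A \biedge Y$ it would yield the bidirected path $M \biedge Z \biedge A \biedge Y$. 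Because Step 2 already shows $Y$ is not a child of $A$, the only child of $A$ is $M$, and we have just shown there is no bidirected path from $A$ to $M$ in $\G$; hence $A$ has no bidirected path to any of its children, as claimed.

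\textbf{Main obstacle.} The delicate point is Step 1's use of (A3) to convert the algebraic Verma test $p/p(M \mid A, Z)$ into the structural claim that $M$ is fixable in $\G$ with conditioning set exactly $\{A, Z\}$; this specific form of the fixing formula is what drives the fixability contradictions in Step 3. The secondary challenge is the case bookkeeping, making sure every combination of anchor-to-$A$ edge ($Z \diedgeright A$ versus $Z \biedge A$) and every surviving bidirected edge around $A, M, Y$ is checked against both the m-separation in (ii) and the non-trivial dependencies mandated by (A2).
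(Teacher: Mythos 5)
Your proposal is correct and follows essentially the same route as the paper's proof: invoke Verma faithfulness to obtain identifiability of $p(Z,A,Y\mid\doo(m))$ plus m-separation of $Z$ and $Y$ in the CADMG, then use Tian's no-bidirected-path-from-$M$-to-$Y$ characterization together with (A1) and (A2) to exclude $M\biedge Y$, $Z$--$Y$ adjacency, $A\diedgeright Y$, $A\biedge M$, and the $Z\biedge A$/$Z\biedge M$ pairing. The only cosmetic difference is ordering -- you establish that $A\biedge Y$ is compelled before ruling out $A\biedge M$, whereas the paper rules out $A\biedge M$ by pairing it with either $M\biedge Y$ or $A\biedge Y$ -- but the case analysis and the appeal to the relevance condition (the paper's Lemma on non-trivial constraints) are the same.
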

The intuition is as follows (see Appendix~\ref{app:proofs} for all proofs.) Under Verma faithfulness, any $p(Z, A, M, Y)$ satisfying  the Verma constraint in Theorem~\ref{thm:fd} must be nested Markov wrt an ADMG $\G$ where: (i) $p(Z, A, Y | \doo(m))$ is identified, and (ii) $Z$ and $Y$ are m-separated in the corresponding CADMG obtained by deleting incoming edges to $M.$ Distributions that factorize wrt to ADMGs where $A$ has a bidirected path to one of its children are incompatible with one or both of these requirements. For example, adding $A \diedgeright Y$ to Fig.~\ref{fig:intro}(b) in violation of the exclusion restriction results in a trivial bidirected path from $A$ to its child $Y.$ The model implies $p(Z, A, Y| \doo(m))$ is identified, however, $Z$ and $Y$ are not m-separated in the resulting CADMG. On the other hand, in cases where $A$ has a bidirected path to $M,$  the kernel $p(Z, A, Y| \doo(m))$ is not identified from observed data.
For example, violating (F2) of the front-door criterion by adding $A \biedge M$ or $M \biedge Y$ to Fig.~\ref{fig:intro}(b) leads to graphs where $M$ also has a bidirected path to its child $Y,$ which results in non-identification per  \cite{tian2002general}. A pattern representation of all ADMGs that satisfy the Verma constraint  is shown in Fig.~\ref{fig:pattern-admgs}(a). In the pattern, presence of solid edges between $A, M, Y$ and absence of any other edges between them correspond to the front-door model,  and is ``compelled'' by the Verma constraint; the dashed edges can be present or absent, with the restriction that at least either $Z\diedgeright A$ or $Z\biedge A$ (or both) exists (assumption A1) and $A$ has no bidirected path to $M$ (compelled by the constraint.)  For an exhaustive list of valid ADMGs drawn from this pattern see Appendix~\ref{app:list}. Invalid ADMGs drawn from Fig.~\ref{fig:pattern-admgs}(a) are ones where $Z, A, M, Y$ form a single district; a pattern representation of these is shown in Fig.\ref{fig:pattern-admgs}(b).

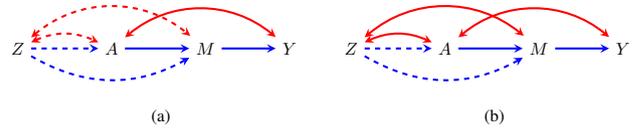
\begin{figure}[t]
	\begin{center}
		\scalebox{0.65}{
			\begin{tikzpicture}[>=stealth, node distance=1.9cm]
				\tikzstyle{square} = [draw, thick, minimum size=1.0mm, inner sep=3pt]
				\tikzstyle{format} = [thick, circle, minimum size=1.0mm, inner sep=3pt]
				
				\begin{scope}[]
					\path[->, very thick]
					node[] (z) {$Z$}
					node[right of=z] (a) {$A$} 
					node[right of=a] (m) {$M$}
					node[right of=m, xshift=-0.2cm] (y) {$Y$}
					
					(z) edge[dashed, blue] (a)
					(z) edge[dashed, red, <->, bend left=25] (a)
					(z) edge[dashed, blue, bend right] (m)
					(z) edge[dashed, red, <->, bend left=40] (m)
					(a) edge[blue] (m)
					(m) edge[blue] (y)
					(a) edge[<->, red, bend left=40] (y)
					
					node[below of=a, xshift=1cm, yshift=0.5cm] (a) {(a)} ;
				\end{scope}
				
				\begin{scope}[xshift=6.75cm]
					\path[->, very thick]
					node[] (z) {$Z$}
					node[right of=z] (a) {$A$} 
					node[right of=a] (m) {$M$}
					node[right of=m, xshift=-0.2cm] (y) {$Y$}
					
					(z) edge[red, <->, bend left=25] (a)
					(z) edge[blue, dashed, bend right=0] (a)
					(z) edge[red, <->, bend left=40] (m) 
					(z) edge[blue, dashed, bend right] (m) 
					(a) edge[blue] (m)
					(m) edge[blue] (y)
					(a) edge[<->, red, bend left=40] (y)
					
					node[below of=a, xshift=1cm, yshift=0.5cm] (c) {(b)} ; 
				\end{scope}
				
			\end{tikzpicture}
		}
	\end{center}
	\vspace{-0.4cm}
	\caption{(a) A pattern representing ADMGs that satisfy the  restriction $Z \ci Y$ in $p(Z, A, M, Y)/p(M|A,Z)$; (b) A pattern representing ADMGs that imply no non-parametric equality constraints, and should \emph{not} be construed from (a).}
	\label{fig:pattern-admgs}
\end{figure}

Importantly for downstream causal inference, we  show that all ADMGs derived from the pattern in Fig.~\ref{fig:pattern-admgs}(a) share the same identification theory for the effect of $A$ on $Y.$ Since $A$ has no bidirected path to its children in any such $\G,$ the post-intervention distribution $p(Z, M, Y | \doo(a))$ is given by a truncated version of the district factorization where we divide by a \emph{nested} propensity score for $A$ \citep{tian2002general, bhattacharya2020semiparametric}. Let $q_{D_A}(D_A | \pa_\G(D_A))$ represent the intrinsic kernel corresponding to the district containing $A$ in $\G.$ From the pattern, $D_A$ is either $\{A, Z, Y\}$ or $\{A, Y\}.$ The required nested propensity score $\widetilde{q}(A|Y, Z, M)$ is derived from this kernel via conditioning on all  elements in $D_A$ besides $A.$ That is, $\widetilde{q}(A|Y, Z, M) = q_{D_A}/{\sum_A q_{D_A}}.$ In the case when $D_A = \{A, Z, Y\}$ we get  $\frac{p(A|Z)\times p(Y|A, Z, M)}{\sum_A p(A|Z)\times p(Y|A, Z, M)}.$ It is easy to confirm from ~\eqref{eq:intrinsic_kernels} that when $D_A=\{A, Y\}$ we get the same result. Based on these observations we have the following identification result wrt the patterns in Fig.~\ref{fig:pattern-admgs}(a).
\begin{lemma}
	\label{lem:pattern-id}
	In joint distributions that nested factorize wrt valid ADMGs derived from Fig.~\ref{fig:pattern-admgs}(a), we have  $p(Z, M, Y | \doo(a)) = p(Z, A, M , Y)/\widetilde{q}(A|Y, Z, M) \vert_{A=a},$ where $\widetilde{q}(A|Y, Z, M) = \frac{p(A|Z)\times p(Y|A, Z, M)}{\sum_A p(A|Z)\times p(Y|A, Z, M)}.$ Since the entire post-intervention is identified, the target $\E[Y|\doo(a)]$ is also identified as $\sum_{Z, M, Y} p(Z, M, Y | \doo(a)) \times Y.$
\end{lemma}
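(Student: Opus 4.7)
The plan is to reduce the claim to Tian's identification criterion \citep{tian2002general} and then verify, by a short case analysis over the configurations compatible with the pattern in Figure~\ref{fig:pattern-admgs}(a), that the induced nested propensity score collapses to the single expression stated in the lemma.

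First I would invoke Theorem~\ref{thm:fd} to conclude that $p(Z,A,M,Y)$ nested Markov factorizes with respect to some valid ADMG $\G$ drawn from Figure~\ref{fig:pattern-admgs}(a). By construction, the only child of $A$ in $\G$ is $M$, and the Verma constraint compels the absence of any bidirected path from $A$ to $M$. Tian's ID criterion therefore applies, yielding
\begin{align*}
p(V \setminus A \mid \doo(a)) \;=\; \frac{p(V)}{\widetilde{q}(A\mid D_A\setminus A,\pa_\G(D_A))}\bigg\vert_{A=a},
\end{align*}
where $\widetilde{q} = q_{D_A}(D_A\mid\pa_\G(D_A))/\sum_A q_{D_A}(D_A\mid\pa_\G(D_A))$ and $D_A$ denotes the district of $A$ in $\G$.

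Next I would enumerate the possible shapes of $D_A$. The pattern always contains $A\biedge Y$, optionally contains $Z\biedge A$, and forbids $Z\biedge M$ whenever $Z\biedge A$ is present (otherwise $A\biedge Z\biedge M$ would be a bidirected path from $A$ to its child $M$). Hence $D_A \in \{\{A,Y\},\{A,Z,Y\}\}$, and in either case $D_A\setminus A$ together with $\pa_\G(D_A)$ equals $\{Y,Z,M\}$, which justifies the conditioning set in $\widetilde{q}(A\mid Y,Z,M)$. In the first case, $q_{\{A,Y\}}(A,Y\mid Z,M) = p(A\mid Z)\times p(Y\mid Z,A,M)$, exactly as in \eqref{eq:intrinsic_kernels}. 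In the second case, district factorization gives $p(V) = q_{\{A,Z,Y\}}(A,Z,Y\mid M)\times q_M(M\mid\pa_\G(M))$, and since $\{M\}$ is a singleton district one has $q_M(M\mid\pa_\G(M)) = p(M\mid A,Z)$ as a function of $(A,M,Z)$ --- directly when $Z\diedgeright M$ is present in $\G$, and via the m-separation $Z\ci M\mid A$ when it is absent. Solving yields $q_{\{A,Z,Y\}}(A,Z,Y\mid M) = p(Z)\times p(A\mid Z)\times p(Y\mid Z,A,M)$.

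Normalizing either kernel by $\sum_A$ cancels the leading $p(Z)$ factor in the second case, so both configurations produce the same
\begin{align*}
\widetilde{q}(A\mid Y,Z,M) \;=\; \frac{p(A\mid Z)\,p(Y\mid Z,A,M)}{\sum_A p(A\mid Z)\,p(Y\mid Z,A,M)}.
\end{align*}
Substituting into the identification formula above delivers the claimed expression for $p(Z,M,Y\mid\doo(a))$, and multiplying by $Y$ and summing over $Z,M,Y$ gives $\E[Y\mid\doo(a)]$. The main obstacle is the collapse step: verifying that the two possible districts for $A$ yield the same nested propensity score. Once the intrinsic kernels are written down, this is a short algebraic cancellation, but it critically relies on the pattern-level prohibition of $Z\biedge M$ whenever $Z\biedge A$ is present, which keeps the district containing $A$ small enough for $p(Z,A,Y\mid\doo(m))$ to remain identified and for the $p(Z)$ factor to cancel cleanly.
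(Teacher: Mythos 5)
Your proposal is correct and follows essentially the same route as the paper's proof: apply Tian's identification result to write $p(Z,M,Y\mid\doo(a))$ as $p(V)$ divided by the normalized district kernel $q_{D_A}/\sum_A q_{D_A}$, then check the two possible districts $D_A=\{A,Y\}$ and $D_A=\{Z,A,Y\}$ and observe that the leading $p(Z)$ factor cancels under normalization so both yield the same $\widetilde{q}(A\mid Y,Z,M)$. The only cosmetic difference is that you recover $q_{\{Z,A,Y\}}$ by dividing the district factorization of $p(V)$ by $q_M$, whereas the paper reads it off directly as $p(Z,A,Y\mid\doo(m))$; these are the same computation.
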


The above functional resembles a truncated factorization in the sense that we divide the joint  $p(Z, A, M, Y)$ by a conditional kernel $\widetilde{q}(.)$ of $A$ similar to how in a fully observed DAG, intervention on $A$ would entail division by a simple conditional factor of $A$, see \eqref{eq:fact_truncation}. This informs the design of tests in the next section. \cite{jaber2019causal} propose general identification results based on patterns of ordinary Markov equivalence; Lemma~\ref{lem:pattern-id} differs in that it is based on a pattern of nested Markov equivalence. Such results will become increasingly important as more causal discovery procedures that incorporate Verma constraints\citep{shpitser2014introduction, bhattacharya2021differentiable} are developed. 

We end this section by noting that while the criterion in Theorem~\ref{thm:fd} is sufficient to guarantee identification via the above functional, it is not necessary. That is, verifying the presence of the Verma constraint assures the analyst that the ACE is computed in an identified model. However, situations in which the constraint does not hold fall into two cases: models where the effect is not identified, and ones in which it is, but there is no constraint between the anchor and outcome because of, say, a direct effect of $Z$ on $Y$ or confounded dependence between them. We have already discussed the former cases; as a simple example of the latter, consider the ADMG  in Fig.~\ref{fig:intro}(b) and add the $Z \diedgeright Y$ edge. There is no longer any Verma constraint present (the nested Markov model of this ADMG imposes no non-parametric equality restrictions whatsoever), but the identification conditions still hold. Nonetheless, the criterion is a useful pre-test for front-door adjustment and its extensions. 

\section{Testing and Effect Estimation}
\label{sec:finite-sample}

We now discuss  procedures for testing the Verma constraint and estimating the effect from finite samples. Directly testing whether the kernel $q_Y(Y|M)\equiv \sum_A p(A|Z)\times p(Y|A, M, Z)$ is not a function of $Z$ using natural parameterizations of the observed data likelihood leads to the g-null paradox \citep{robins1997estimation}. Hence, we borrow ideas from inverse probability weighting (IPW) and marginal structural models \citep{robins2000marginal} for this purpose. As mentioned in the introduction, we will propose two distinct ways of testing the constraint and also discuss non-parametric extensions of these tests.

\textbf{Primal test and Primal IPW}:

The first test is based on weights used in the primal IPW estimator for the front-door functional proposed in \cite{bhattacharya2020semiparametric}. 
Consider a chain factorization of the observed  data $p(Z, A, M, Y)=p(Z) \times p(A|Z) \times p(M | A, Z) \times p(Y|Z, A, M)$ for any valid ADMG derived from  Fig~\ref{fig:pattern-admgs}(a). Given this factorization, the post-intervention distribution after intervening on $A$ is identified per Lemma~\ref{lem:pattern-id} as, 

\vspace{-0.45cm}
{\small
	\begin{align}
		&p(Z, M, Y | \doo(a)) = p(Z, A, M , Y)/\widetilde{q}(A|Y, Z, M) \vert_{A=a} 	\label{eq:do_a} \\
		&\hspace{0.2cm} = p(Z) \times p(M|a, Z) \times \sum_A p(A|Z) \times p(Y|Z, A, M). \nonumber 
\end{align}}%
This post-intervention distribution is nested Markov equivalent to the CADMG in Fig.~\ref{fig:primal-dual-tests}(a), where we see the Verma constraint: $Y \ci Z | M.$ Testing this independence in $p(Z, M , Y | \doo(a))$ is equivalent to testing $q_Y(Y|M)$ is not a function of $Z$ due to the following district factorization of the CADMG in terms of intrinsic kernels: $q_Z(Z)\times q_M(M|a, Z)\times q_Y(Y|M).$ That is, the independence found via m-separation in the CADMG corresponds to the same restriction that the kernel $q_Y(Y|M)$ is not a function of $Z.$ To test this constraint, we need to compare the conditional kernels of $q_Y(Y | Z, M)$ and $q_Y(Y|M).$ From a causal perspective, this can be viewed as evaluating the goodness-of-fit of models $Y|Z, M$ and $Y|M$ wrt the post-intervention distribution $p(Y, Z, M | \doo(a)).$ We use ideas from marginal structural models, where causal parameters are estimated using inverse weights based on the propensity score of the treatment. Here, we can use weights derived from the  nested propensity score of the treatment $\widetilde{q}(A|Y, Z, M)$ given its relation to the post-intervention distribution $p(Z, M, Y | \doo(a))$ per Lemma~\ref{lem:pattern-id}. Following \cite{bhattacharya2020semiparametric}, we  refer to these as \emph{primal weights}.

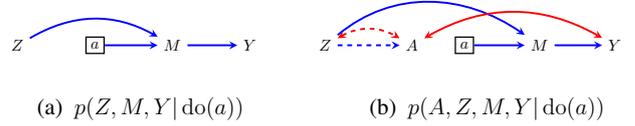
\begin{figure}[t]
	\begin{center}
		\scalebox{0.6}{
			\begin{tikzpicture}[>=stealth, node distance=1.9cm]
				\tikzstyle{square} = [draw, thick, minimum size=1.0mm, inner sep=3pt]
				
				\begin{scope}[xshift=0cm]
					\path[->, very thick]
					node[] (z) {$Z$}
					node[right of=z, square, xshift=-0.2cm] (a) {$a$} 
					node[right of=a, xshift=-0.2cm] (m) {$M$}
					node[right of=m, xshift=-0.2cm] (y) {$Y$}
					
					(z) edge[blue, bend left] (m) 
					(a) edge[blue] (m)
					(m) edge[blue] (y)
					
					node[below of=a, xshift=1cm, yshift=0.5cm] (a) {\Large (a) \ $p(Z, M, Y | \doo(a))$} ;
				\end{scope}
				
				\begin{scope}[xshift=6.75cm]
					\path[->, very thick]
					node[] (z) {$Z$}
					node[right of=z] (a) {$A$} 
					node[right of=a, square, xshift=-0.75cm] (a0) {$a$} 
					node[right of=a0, xshift=-0.25cm] (m) {$M$}
					node[right of=m, xshift=-0.25cm] (y) {$Y$}
					
					(z) edge[red, <->, bend left, dashed] (a)
					(z) edge[blue, dashed] (a)
					(z) edge[blue, bend left=20, out=40, in=145] (m)
					(m) edge[blue] (y)
					(a0) edge[blue] (m)
					(a) edge[<->, red, bend left] (y)
					
					node[below of=a0, xshift=0.5cm, yshift=0.5cm] (b) {\Large (b) \ $p(A, Z, M, Y | \doo(a))$} ;
				\end{scope}
				
			\end{tikzpicture}
		}
	\end{center}
	\vspace{-0.25cm}
	\caption{(a) CADMG corresponding to $p(Z, M, Y | \doo(a))$; (b) CADMG corresponding to $p(Z, M, Y, A | \doo(a))$.}
	\label{fig:primal-dual-tests}
\end{figure}

Formally, let $\pi(Z, M; \alpha_y) \coloneqq p(Y  | Z, M, \doo(a); \alpha_y)$ and $\pi(M; \beta_y) \coloneqq p(Y | M, \doo(a); \beta_y)$, {where $\alpha_y$ and $\beta_y$ denote the set of parameters used to model the corresponding distributions.} We can consistently estimate $\alpha_y$ and $\beta_y$ using samples from the observed distribution $p(Z, A, M, Y)$ via the following unbiased estimating equations: 

\vspace{-0.5cm}
{\small
	\begin{align}
		\!\! \! \mathbb{P}_n \bigg[  \frac{U(\pi(Z, M; \alpha_y))}{\widetilde{q}(A \mid Y, Z, M; \widehat{\eta})} \bigg] = 0, \mathbb{P}_n \bigg[  \frac{U(\pi(M; \beta_y))}{\widetilde{q}(A \mid Y, Z, M; \widehat{\eta})} \bigg] = 0, \!
		\label{eq:primal_est}
	\end{align}
}%
where $\mathbb{P}_n[.] \coloneqq \frac{1}{n} \sum_{i=1}^n (.)$;  $\mathbb{P}_n[U(\pi(Z, M; \alpha_y))] = 0$ and $\mathbb{P}_n[U(\pi(M; \beta_y))] = 0$ are unbiased estimating equations for $\alpha_y$ and $\beta_y$ under the observed distributions $p(Y|Z,M)$ and $p(Y|M),$ respectively; 
$\widehat{\eta}$ denotes the estimated parameters for $p(A | Z)$ and $p(Y | Z, A, M)$ used to compute primal weights $1/\widetilde{q}(A|Y, Z, M).$ Once $\alpha_y$ and $\beta_y$ are estimated, we can compare goodness-of-fit between $\pi(Z, M; \alpha_y)$ and $\pi(M; \beta_y)$ via likelihood ratio or Wald tests (the latter only requires $\alpha_y$) \citep{robins1997estimation, agostinelli2001test}. The procedure can be summarized as: 
\begin{enumerate}
	\item Fit models for $p(A|Z)$ and $p(Y|Z, A, M)$,  and predict primal weights $1/\widetilde{q}(A|Y, Z, M)$ for each row of data, 
	\item Use the estimated weights to fit weighted regressions $p(Y|Z,M, \doo(a))$ and $p(Y|M, \doo(a))$ using (\ref{eq:primal_est}), and compare  goodness of fits between these two models. 
\end{enumerate}

If the test indicates that the Verma constraint holds with some pre-specified significance level,  this suggests we are in a model given by the equivalence class in Fig.~\ref{fig:pattern-admgs}(a), and the  effect is identified. We can then re-use  the models fitted above to compute the counterfactual mean using the primal IPW estimator proposed in \cite{bhattacharya2020semiparametric}: 
\begin{align}
	\widehat{\E}[Y|\doo(a)] = \mathbb{P}_n\bigg[ \frac{\I(A = a)}{\widetilde{q}(A \mid Y, Z, M; \widehat{\eta})} \times Y \bigg]. 
	\label{eq:primal_ipw}
\end{align}%

\textbf{Dual test and Dual IPW}: 

The Verma test based on primal weights relies on  correct specification of the treatment and outcome models. We now provide alternatives that instead rely on specification of the mediator model $p(M | A, Z).$  The post-intervention distribution after intervening on $M$ in any valid ADMG derived from the pattern in Fig~\ref{fig:pattern-admgs}(a) is identified as: $p(Z, A, Y | \doo(m))=p(Z) \times p(A | Z) \times p(Y | Z, A, M=m),$ which is nested Markov equivalent to the CADMG  in Fig.~\ref{fig:intro2}(b); here we see the usual phrasing of the Verma constraint $Z \ci Y$ in $p(Z, A, Y | \doo(m)).$ One way to empirically test the constraint is to use a similar procedure as the one described with the primal weights, but instead compare goodness-of-fit for $p(Y|Z, \doo(m))$ and $p(Y| \doo(m))$ using inverse weights $1/p(M|A, Z);$ this is the g-null test described in \cite{robins1986new, robins1997estimation}. However, typical IPW weights may suffer from various numerical issues, so instead we describe a stabilized version of the g-null test that uses weights which can also be plugged into the dual IPW estimator for $\E[Y|\doo(a)]$ proposed by \cite{bhattacharya2020semiparametric}.  The \emph{dual weights} use a ratio of densities (which leads to stabilization of weights) as follows:
\begin{align}
	q^d(M|A, Z) \equiv \frac{p(M|A, Z)}{p(M|A=a, Z)},
\end{align}
for any given choice of intervention value $A=a.$

The reason these weights are suitable for this purpose is due to its relation to the following post-intervention distribution:
\begin{align*}
	p(A, Z, M, Y | \doo(a)) = p(Z, A, M, Y)/q^d(M|A, Z). 
\end{align*}
This post-intervention distribution is nested Markov wrt the CADMG shown in Fig.~\ref{fig:primal-dual-tests}(b) where both fixed $a$ and random $A$ are present. Such CADMGs arise in single world intervention graph (SWIG) interpretations of  identification algorithms \citep{bhattacharya2020semiparametric, shpitser2020multivariate}. It can be confirmed that $p(Z, M, Y | \doo(a))$ is obtained by simply marginalizing over $A$ in the above equation. Similar to the CADMG in Fig.~\ref{fig:primal-dual-tests}(a) we have $Z \ci Y | M$ in Fig.~\ref{fig:primal-dual-tests}(b)  corresponding to $q_Y(Y|M)$ not being a function of $Z.$ A two-step testing procedure can be summarized as follows: 
\begin{enumerate}
	\item Fit a model for $p(M|A, Z)$  and predict dual weights $1/q^d(M|A, Z)$ for each row of data.\footnote{This step may also be improved using ideas in \cite{menon2016linking} for estimating density ratios directly.}
	\item Use the estimated weights to fit weighted regressions $p(Y|Z,M, \doo(a))$ and $p(Y|M, \doo(a))$ using (\ref{eq:primal_est}), but with $q^d(M|A, Z)$ in the denominator, and compare  goodness-of-fit between these two models. 
\end{enumerate}%
If the test succeeds, we can re-use the same models  in the following dual IPW estimator \citep{bhattacharya2020semiparametric}: 
\begin{align}
	\widehat{\E}[Y|\doo(a)] = \mathbb{P}_n\bigg[\frac{p(M|A=a, Z; \widehat{\eta})}{p(M \mid A, Z; \widehat{\eta})} \times Y \bigg]. 
	\label{eq:dual_ipw}
\end{align}

\textbf{Non-parametric extensions of primal and dual tests}:

Given any non-parametric test $\tau(Y, Z, M)$ that is appropriate for testing an ordinary independence $Y \ci Z | M,$ \cite{thams2021statistical} propose to test the generalized constraint $Y \ci Z | M$ in $p(Z, M, Y | \doo(a))$ by applying $\tau$ to a pseudo-dataset that mimics this post-intervention distribution. This pseudo-dataset is created via a resampling scheme where each row is resampled with some (potentially unnormalized) probability $1/p(A \mid \cdot)$, where $p(A \mid \cdot)$ corresponds to the propensity score  required to obtain the post-intervention distribution where independence holds. That is, the resampling is done based on the usual inverse probability weights used to estimate the effect of $A$ on any downstream outcomes.  While the propensity scores in \cite{thams2021statistical} corresponded to simple conditional distributions as in a conditionally ignorable model, this  technique can be directly adapted to our methods by resampling the pseudo-dataset based on the nested propensity score (primal weights) or the dual weights. In our experiments we design a non-parametric test by applying the Fast Conditional Independence Test \citep{chalupka2018fast}  in  pseudo-datasets created via sampling with dual weights estimated via random forests rather than parametric models. A more detailed explanation is provided in Appendix~\ref{app:np-tests}.

\section{Intersection With IV Models}
\label{sec:iv}

Consider the subpattern in Fig.~\ref{fig:IV}(b) corresponding to the ADMGs in Fig~\ref{fig:pattern-admgs}(a) that do not include any edge between $Z$ and $M.$ Since these ADMGs are consistent with the pattern in Fig.~\ref{fig:pattern-admgs}(a) they still satisfy the front-door conditions (in fact, these correspond to the classical front-door assumptions in \cite{pearl1995causal}) and imply the Verma restriction discussed in previous sections. In addition,  $Z$ also satisfies the instrumental variable condition in these graphs. A variable $Z$ is said to satisfy the IV conditions wrt $A$ and $Y$ in $\G$ if (the following applies to the ``classical" IV model -- for more general definitions, see \cite{van2015efficiently}):
\begin{enumerate}
	\item[{\bf (I1)}] $Z \diedgeright A$ or $Z \biedge A$ or both exist in $\G.$
	\item[{\bf (I2)}] $Z$ and $Y$ are m-separated in a sub-graph where $A$ and edges involving $A$ are deleted.
\end{enumerate}

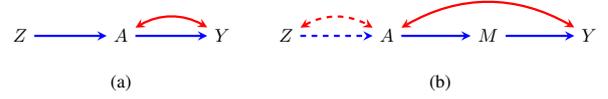
\begin{figure}[t]
	\begin{center}
		\scalebox{0.7}{
			\begin{tikzpicture}[>=stealth, node distance=1.9cm]
				\tikzstyle{square} = [draw, thick, minimum size=1.0mm, inner sep=3pt]	
				\begin{scope}[xshift=0cm]
					\path[->, very thick]
					node[] (z) {$Z$}
					node[right of=z] (a) {$A$} 
					node[right of=a] (y) {$Y$}
					
					(z) edge[blue] (a)
					(a) edge[blue] (y)
					(a) edge[<->, red, bend left] (y)
					
					node[below of=a, xshift=0cm, yshift=1.cm] (a) {(a)} ;
				\end{scope}
				
				\begin{scope}[xshift=5cm]
					\path[->, very thick]
					node[] (z) {$Z$}
					node[right of=z] (a) {$A$} 
					node[right of=a] (m) {$M$}
					node[right of=m] (y) {$Y$}
					
					(z) edge[blue, dashed] (a)
					(z) edge[dashed, bend left, <->, red] (a)
					(a) edge[blue] (m)
					(m) edge[blue] (y)
					(a) edge[<->, red, bend left] (y)
					
					node[below of=a, xshift=1cm, yshift=1.cm] (b) {(b)} ;
				\end{scope}
				
			\end{tikzpicture}
		}
	\end{center}
	\vspace{-0.3cm}
	\caption{(a) The classical instrumental variable model; (b) The front-door model with an instrumental variable $Z$. }
	\label{fig:IV}
\end{figure}

ADMGs where the additional IV assumptions hold are easily distinguished from other valid ADMGs in Fig.~\ref{fig:pattern-admgs}(a) by noting that they encode an additional ordinary independence constraint: $Z \ci M | A.$ This leads to a simple corollary.


\begin{corollary}
	Under assumptions (A1-A3), distributions $p(Z, A, M, Y)$ that satisfy both $Z \ci Y$ in $p(Z, A, M, Y)/p(M|A, Z)$ and $Z \ci M | A$ nested Markov factorize wrt an ADMG satisfying the front-door conditions (F1) and (F2), and IV conditions (I1) and (I2).
\end{corollary}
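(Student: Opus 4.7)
The plan is to combine Theorem~\ref{thm:fd} with a short m-separation analysis. By Theorem~\ref{thm:fd}, the Verma constraint $Z \ci Y$ in $p(Z,A,M,Y)/p(M \mid A,Z)$ together with (A1--A3) forces $p$ to nested Markov factorize wrt some ADMG $\G$ drawn from the pattern in Fig.~\ref{fig:pattern-admgs}(a); every such $\G$ already satisfies (F1) and (F2), so the only remaining work is to argue that the additional ordinary independence $Z \ci M \mid A$ compels $\G$ into the sub-pattern of Fig.~\ref{fig:IV}(b) where the IV conditions also hold.

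For (I1), I would invoke (A2): since $Z \not\ci A$ and $Z$ is not a causal consequence of $A$, the ordinary-independence part of Verma faithfulness (A3) rules out $Z$ and $A$ being m-separated in $\G$, and the acyclicity constraint ($Z$ cannot be a descendant of $A$) then forces at least one of $Z \to A$ or $Z \biedge A$ to be present in $\G$, which is precisely (I1).

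For (I2), I would argue via an exhaustive enumeration of paths. Applying Verma faithfulness to $Z \ci M \mid A$ yields m-separation of $Z$ from $M$ given $A$ in $\G$. The potential paths between $Z$ and $M$ in the pattern of Fig.~\ref{fig:pattern-admgs}(a) are: (i) the direct edges $Z \to M$ and $Z \biedge M$, which cannot be blocked; (ii) the chain paths $Z \to A \to M$ and $Z \biedge A \to M$, where $A$ appears as a non-collider and is thus blocked by conditioning on $A$; and (iii) the paths $Z \to A \biedge Y \leftarrow M$ and $Z \biedge A \biedge Y \leftarrow M$, where $A$ is now a collider (opened by the conditioning) but $Y$ is also a collider that is never opened, so these paths are blocked at $Y$. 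Hence the required m-separation is equivalent to the absence of \emph{both} dashed edges $Z \to M$ and $Z \biedge M$. Deleting $A$ and all its incident edges from the resulting $\G$ then isolates $Z$ from the remaining nodes $\{M, Y\}$, since $Z \to Y$ and $Z \biedge Y$ were already absent in the pattern (their presence would destroy the Verma constraint), trivially m-separating $Z$ from $Y$ in the reduced graph, which is (I2).

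The main obstacle is the path enumeration in the last step: one must be careful to track which paths have $A$ as a non-collider (blocked by conditioning) versus as a collider (opened by conditioning), and confirm that in the latter case some other node on the path --- here always $Y$ --- closes the path, so that the argument genuinely rules out both dashed $Z$-to-$M$ edges rather than only one. The remaining steps follow directly from Theorem~\ref{thm:fd} and the definitions of the IV and front-door conditions.
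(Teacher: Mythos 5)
Your proof is correct and follows essentially the same route as the paper, which treats the corollary as an immediate consequence of Theorem~\ref{thm:fd} together with the observation that, among the ADMGs in the pattern of Fig.~\ref{fig:pattern-admgs}(a), exactly those with no $Z$--$M$ edge encode the additional ordinary independence $Z \ci M \mid A$ and coincide with the sub-pattern of Fig.~\ref{fig:IV}(b) where the IV conditions hold. Your explicit path enumeration (checking that $A$ as a non-collider blocks the chain paths, and that the collider paths opened at $A$ remain closed at $Y$) correctly fills in the details the paper leaves implicit.
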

If conditions (I1), (I2), and a third condition usually phrased as some form of effect homogeneity (e.g., absence of effect modification due to unmeasured variables $U;$ see \cite{hernan2010causal} for other examples) are satisfied for a binary instrument $Z$ and binary treatment $A,$ then $\E[Y|\doo(a=1)] - \E[Y|\doo(a=0)]$ is identified as $\{\E[Y | Z=1] - \E[Y | Z=0]\}/ \{\E[A | Z=1] - \E[A | Z=0] \}$. 
Though the IV estimated  effect requires additional restrictions beyond structural assumptions encoded in the graph, it would be interesting to explore in future work how estimates from the IV and front-door assumptions can be combined to obtain robustness against misspecification in either model.

\section{Experiments}
\label{sec:experiments}

The experiments focus on 3 tasks: (i) Studying effectiveness of the primal and dual weights for testing front-door assumptions via Verma constraints; (ii) Comparing effect estimates using front-door and IV adjustment when both assumptions hold and when only front-door assumptions hold; (iii) Demonstrating use of our methods in real-world analyses related to the motivating example in Section~\ref{sec:motivation}. Explicit descriptions of all simulated ADMGs and corresponding data generating processes  can be found in Appendix~\ref{app:sims}. Python code for our methods can be found at \url{https://github.com/rbhatta8/fdt}.

\textbf{Task (i)}: 
We consider hidden variable causal models whose observed margins $p(Z, A, M, Y)$ nested factorize wrt 4 different ADMGs: two from Fig.~\ref{fig:pattern-admgs}(a) in which the Verma constraint and front-door assumptions hold, and two where the assumptions do not hold due to additional confounding $A\biedge M$ and $M \biedge Y$, or violation of the exclusion restriction with $A \diedgeright Y.$ We run 200 trials of the following experiment at sample sizes ranging from 200 to 20000. In a given trial we generate data from one of the four ADMGs picked at random, and compute p-values for the Verma constraint using the primal test, dual test, and Fast Conditional Independence test with dual weights fit via random forests\footnote{The non-parametric test is evaluated with data sets where the relations between variables are non-linear.} as described in Section~\ref{sec:finite-sample}. We use each method's p-values at a significance level $\alpha=0.05$ to accept/reject the null hypothesis of a model where the Verma constraint and front-door assumptions hold. We then compute true positive and false positive rates as shown in Fig.~\ref{fig:sub1}. All methods quickly achieve true positive rates of $\sim95\%$ or higher reflecting that type I error (falsely rejecting the null) is  controlled at the desired significance level. False positive rates also drop asymptotically with more samples. The non-parametric test, whose performance is captured by the lines corresponding to ML Dual TPR and ML Dual FPR in Fig.~\ref{fig:sub1}, is  unstable at low sample sizes, but significantly improves with more samples. The primal test outperforms the dual test; whether this is an empirical observation or has theoretical justification is an interesting question for future work. The average bias in downstream causal effect estimates in true positive scenarios (via primal or dual tests) is only $0.04$ at a sample size of $5000$ compared to $0.28$ in false positive scenarios, highlighting the importance of accurate pre-tests.

\begin{figure}[!t]
	\vspace{-0.5cm}
	\hspace{-0.5cm}
	\includegraphics[width=1.15\linewidth]{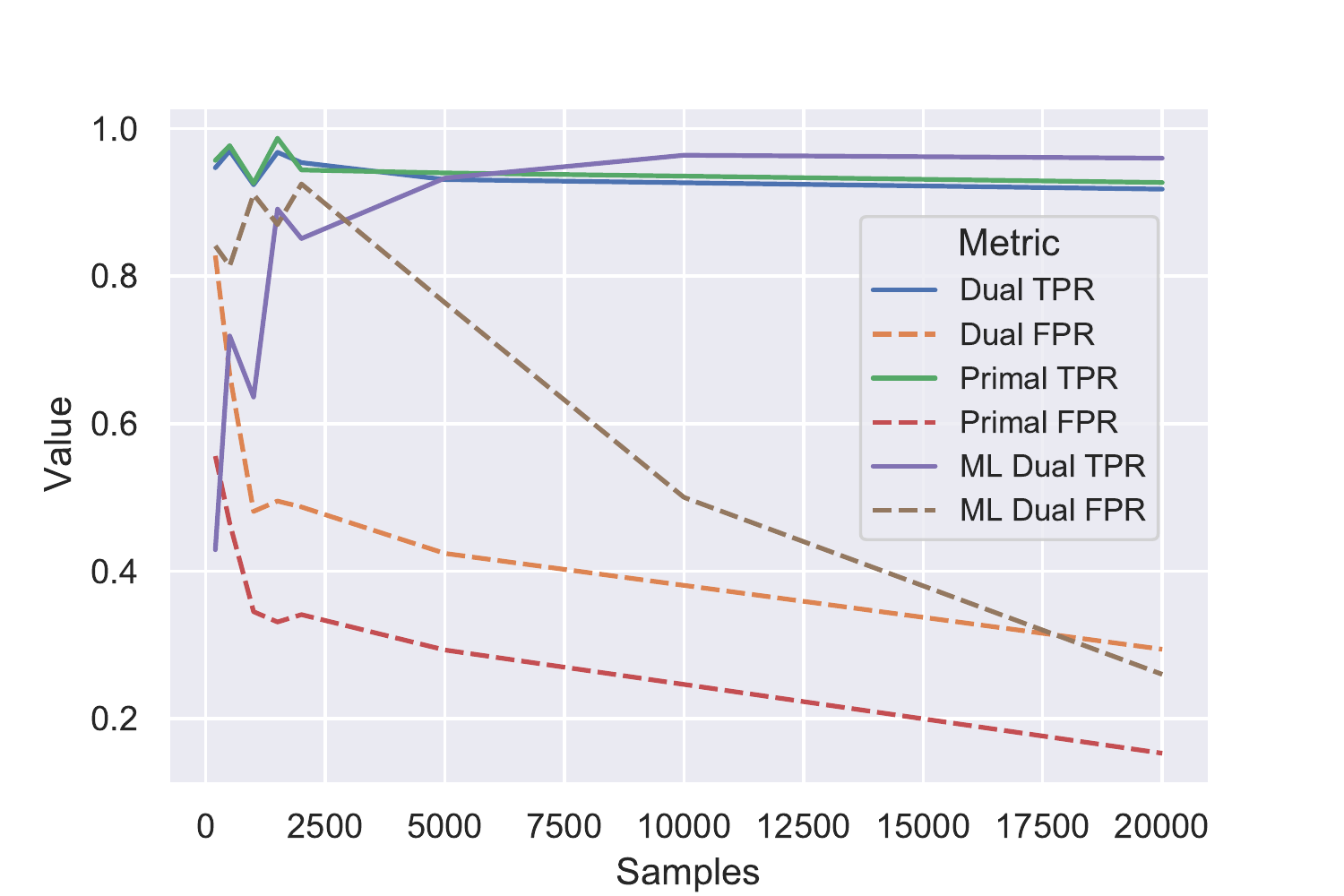}
	\caption{Comparison of different testing procedures.}
	\label{fig:sub1}
\end{figure}

\textbf{Task (ii)}: 
We generate data from one ADMG in which both the front-door and IV assumptions hold and one in which only the front-door assumptions hold. We then compute causal effect estimates using primal IPW, dual IPW, and IV adjustment. In the former case, all methods give unbiased estimates, but IV estimates have higher variance (Fig.~\ref{fig:experiments}(a).) In the latter case, primal and dual IPW remain unbiased, while IV adjustment is significantly biased (Fig.~\ref{fig:experiments}(b).) This raises a question of whether semiparametric estimators can combine all 3 methods to improve statistical efficiency, and provide robustness against misspecification of  not just statistical models, but also different identifying assumptions. 

\begin{figure*}[t]
	\begin{center}
		\begin{subfigure}{.48\textwidth}
			\includegraphics[width=1.115\linewidth]{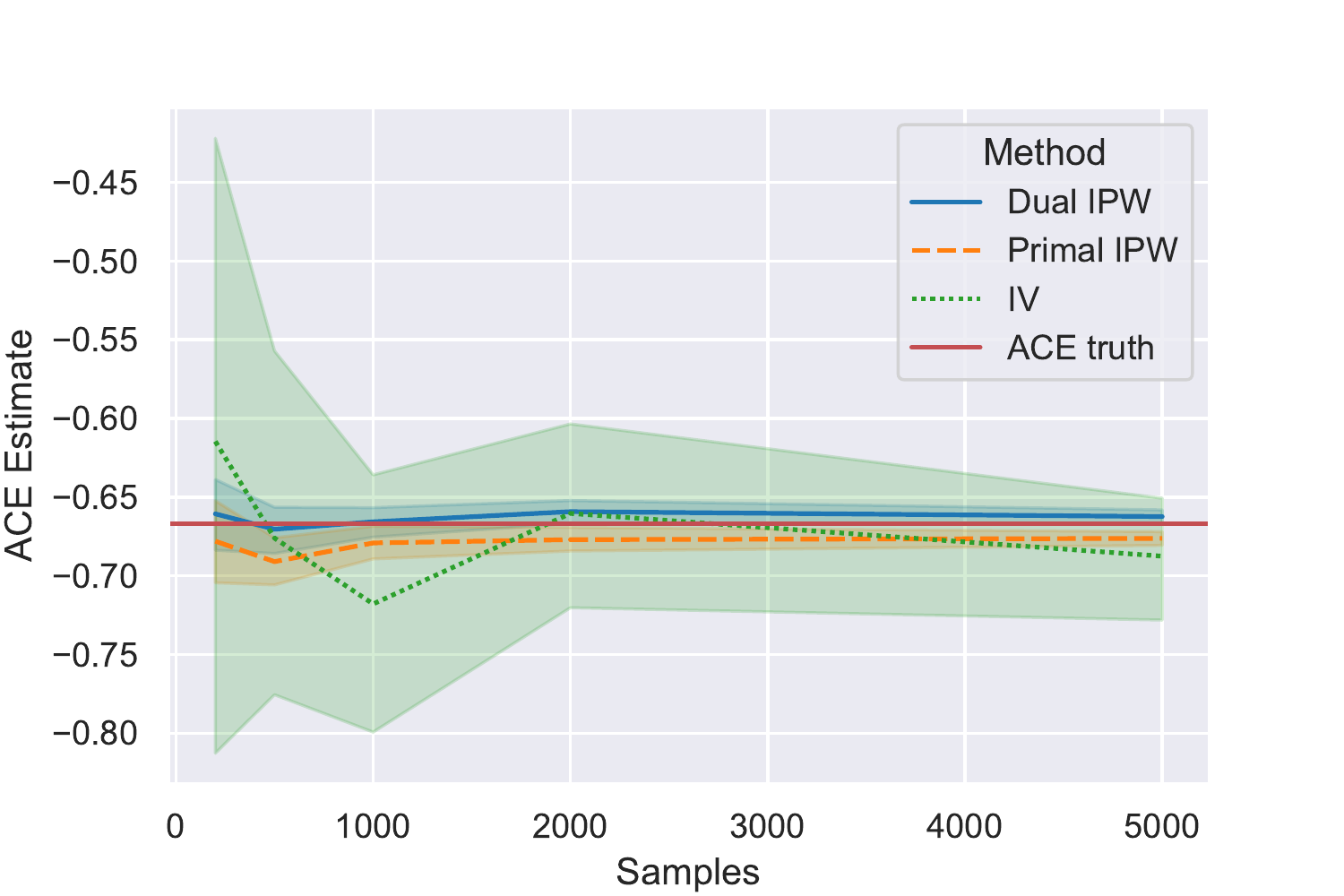}
			\caption{}
			\label{fig:sub2}
		\end{subfigure}
		\hspace{0.5cm}
		\begin{subfigure}{.48\textwidth}
			\includegraphics[width=1.115\linewidth]{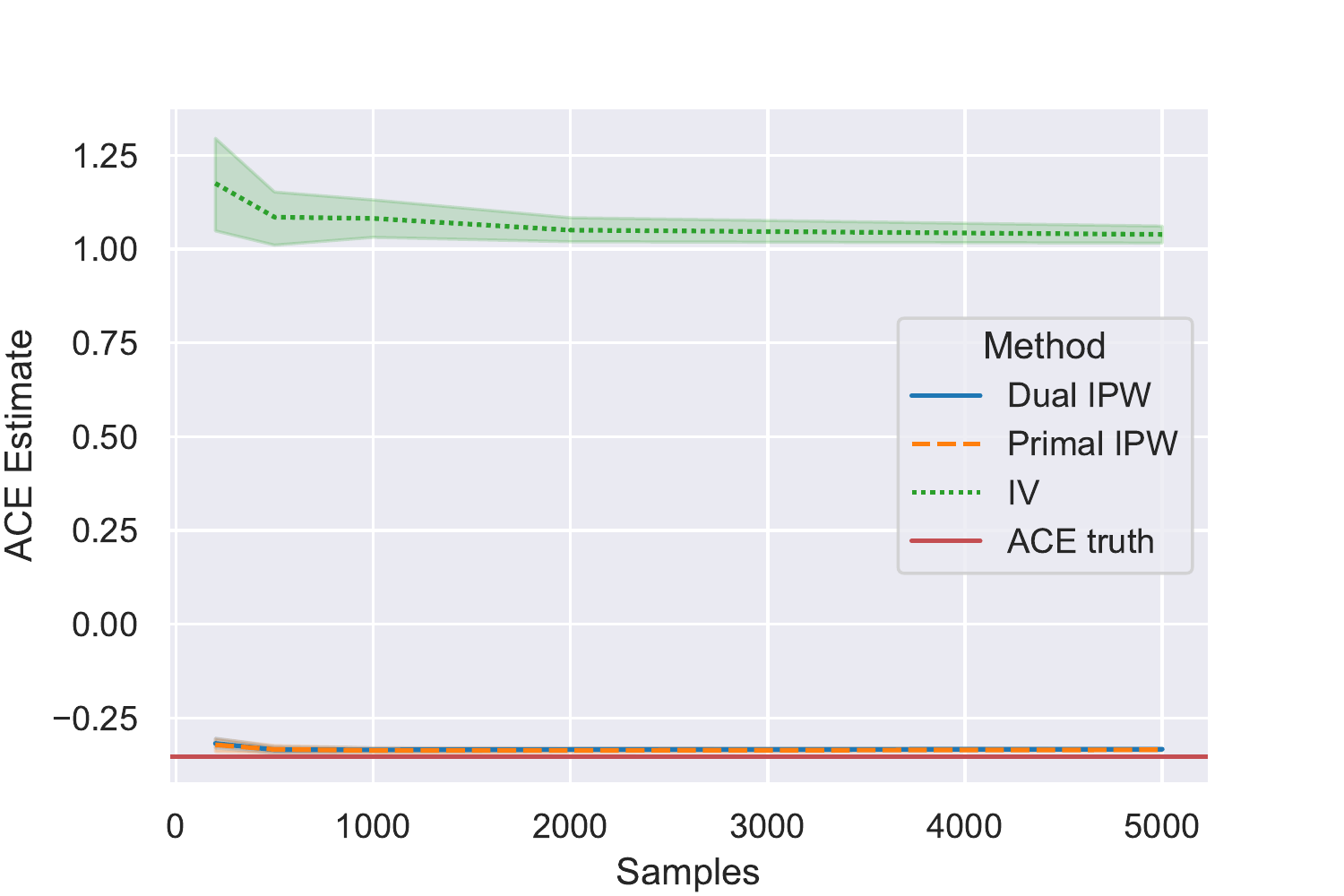}
			\caption{}
			\label{fig:sub3}
		\end{subfigure}
		\caption{ (a) Comparison of front-door and IV effect estimates when both assumptions hold; (b) Comparison of front-door and IV effect estimates when only front-door assumptions hold. }
		\label{fig:experiments}
	\end{center}
\end{figure*}

\textbf{Task (iii)}:
For the final task here, we analyze the effect of smoking (treatment $A$) on developing coronary heart disease (outcome $Y$) using data from the Framingam heart study \citep{kannel1968framingham}. Following Section~\ref{sec:motivation}, we propose hypertension as a candidate mediator $M$ and past history of hypertension as an anchor $Z$. We also include baseline covariates $C$ containing \textit{age}, \textit{sex}, \textit{BMI}, and \textit{past history of heart disease}. The influence of $C$ on $Z, A, M, Y$ can be easily incorporated in our framework by noting that the Verma constraint is now a dormant \emph{conditional} independence: $Z \ci Y | C$ in $p(C, Z, A, Y | \doo(m)).$ All densities/regressions are adapted accordingly to include $C$ in the conditioning set, e.g., we  would use $q^d \equiv p(M|A, Z, C)/p(M|A=a, Z, C)$ to fit causal parameters for $p(Y|Z, M, C, \doo(a))$  and $p(Y|M, C \doo(a))$ in the dual Verma test. More details on including baseline covariates are in Appendix~\ref{app:baseline}.

For modeling flexibility, we apply the non-parametric test with dual weights. We also apply non-parametric tests for $Z \ci M \mid C, A$ to check if $Z$ is a valid (conditional) IV, and $Z \ci Y \mid C, A$; the latter conditional independence is an anchor variable based criterion proposed by \cite{entner2013data} to test if a set of covariates $C$ satisfies the backdoor criterion.  As shown in Table~\ref{tab:framingham}, only the test for front-door assumptions succeeds (with $\alpha=0.05.$) The corresponding point estimate and $95\%$ confidence intervals (using 200 bootstraps) suggest that any amount of smoking (vs. complete abstention) slightly increases the risk of heart disease -- $A$ and $Y$ are encoded as binary variables in the data, so these numbers correspond to $p(Y|\doo(a=1))- p(Y|\doo(a=0))$.

\def\arraystretch{1.4}
\begin{table}[h]
	\centering
	\caption{Results from the Framingham heart study analysis.}
	\label{tab:framingham}
	\scalebox{1.0}{
		\begin{tabular}{|c| c| c|}
			\hline
			\textbf{Method} & \textbf{p-value} & \textbf{Effect estimate}  \\
			\hline
			Front-door & $0.5$ &   $0.014\  (0.005, 0.021)$  \\
			IV &   $0.005$    &   Not applicable   \\ 
			Back-door & $0.007$ & Not applicable \\
			\hline
		\end{tabular}
	}
\end{table}

\vspace{-0.35cm}
\section{Conclusion}
\label{sec:conc}
Based on a testable generalized equality constraint, we have proposed ways to pre-test the front-door model and its extensions. These tests rely on variationally independence pieces of the observed data likelihood. \cite{bhattacharya2020semiparametric} have designed doubly robust semiparametric estimators for the average causal effect in these scenarios -- a direction for future work is to investigate whether the pre-tests themselves can be made doubly robust. We have also proposed scenarios in which both the front-door and IV assumptions hold, which we hope leads to future work on combining estimates across the two models to gain additional robustness.

\begin{contributions} 
	
	
	RB conceived the original idea. RB and RN contributed to development of the proposed framework. RB performed the statistical analysis. RB and RN contributed to the write up and revision of the paper.
	
\end{contributions}

\begin{acknowledgements} 
	
	We thank the anonymous reviewers for their insightful comments which improved the presentation of the paper.
\end{acknowledgements}

\clearpage

\bibliography{references}

\clearpage

\appendix
\providecommand{\upGamma}{\Gamma}
\providecommand{\uppi}{\pi}

\onecolumn

{\Large \bf APPENDIX} 

The Appendix is organized as follows. Appendix~\ref{app:verma-comms} provides additional commentary on assumptions used in the testability criterion, while Appendix~\ref{app:list} lists all ADMGs that satisfy the criterion. Appendices~\ref{app:np-tests},~\ref{app:baseline}, ~\ref{app:nested}, and ~\ref{app:sims} provide additional notes on non-parametric Verma tests, models with additional baseline covariates, the nested Markov factorization, and the simulated datasets respectively. Appendix~\ref{app:proofs} contains proofs of the main results.

\section{On The True Mediation And Relevance Assumptions}
\label{app:verma-comms}

\begin{figure}[h]
	\begin{center}
		\scalebox{0.75}{
			\begin{tikzpicture}[>=stealth, node distance=1.9cm]
				\tikzstyle{square} = [draw, thick, minimum size=1.0mm, inner sep=3pt]
				
				\begin{scope}[xshift=0cm]
					\path[->, very thick]
					node[] (z) {$Z$}
					node[right of=z] (a) {$A$} 
					node[right of=a] (m) {$M$}
					node[right of=m] (y) {$Y$}
					
					(z) edge[blue] (a)
					(m) edge[blue] (y)
					(a) edge[red, <->, bend left] (y) 
					
					node[below of=a, xshift=1cm, yshift=1cm] (b) {(a)} ;
				\end{scope}
				
				\begin{scope}[xshift=8cm]
					\path[->, very thick]
					node[] (z) {$Z$}
					node[right of=z] (a) {$A$} 
					node[right of=a] (m) {$M$}
					node[right of=m] (y) {$Y$}
					
					(a) edge[blue] (m)
					(m) edge[blue] (y)
					(a) edge[<->, red, bend left] (y)
					(a) edge[blue, bend right] (y)
					
					node[below of=a, xshift=1cm, yshift=1cm] (b) {(b)} ;
				\end{scope}

				\begin{scope}[xshift=16cm]
					\path[->, very thick]
					node[] (z) {$Z$}
					node[right of=z] (a) {$A$} 
					node[right of=a] (m) {$M$}
					node[right of=m] (y) {$Y$}
					
					(z) edge[blue] (a)
					(a) edge[blue] (m)
					(m) edge[blue] (y)
					(a) edge[<->, red, bend left] (m)
					
					node[below of=a, xshift=1cm, yshift=1cm] (b) {(c)} ;
				\end{scope}
				
			\end{tikzpicture}
		}
	\end{center}
	\vspace{-0.5cm}
	\caption{ (a) A model that does not satisfy $M$ being a mediator between $A$ and $Y$; (b) A model that does not satisfy $Z\not\ci A$; (c) A model that does not satisfy $Z \not\ci Y | A, M$.}
	\label{fig:trivial-verma}
\end{figure}
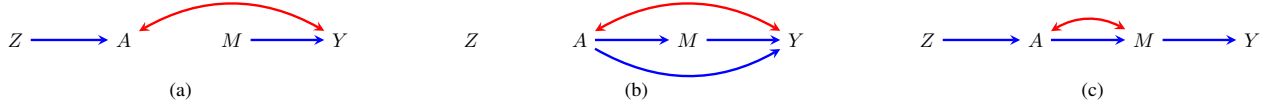

Assumptions (A1) (true mediation) and (A2) (relevance of the anchor) help ensure that any Verma constraint detected between the anchor $Z$ and outcome $Y$  is ``non-trivial,'' i.e., is not implied by an ordinary independence constraint. We first show why this is important by way of example, and then show that no ordinary constraint exists between $Z$ and $Y$ in distributions satisfying (A1) and (A2). Before proceeding, we also briefly note that (A1) is a relatively weak assumption in that it relies on some minimal background knowledge about how the treatment affects the outcome, while (A2) is testable.

Consider the ADMG in Fig.~\ref{fig:trivial-verma}(a). Here, $M$ does not satisfy assumption (A1). It can be checked via m-separation that any distribution that nested factorizes with respect to this ADMG would encode the ordinary independence $Z \ci Y.$ Trivially, this constraint would also hold in the post-intervention distribution $p(Z, A, M, Y)/p(M|A, Z).$ Fig.~\ref{fig:trivial-verma}(b) demonstrates why accidentally testing for trivial Verma constraints such as this one may lead to incorrect conclusions about identifiability of the target $\E[Y|\doo(a)].$ In Fig.~\ref{fig:trivial-verma}(b), we have that $Z \ci A,$ so assumption (A2) is not satisfied. There exist no paths between $Z$ and $Y,$ leading to a trivial Verma constraint implied by the ordinary independence $Z \ci Y.$ Relying on a test of this trivial constraint would also mislead the analyst into thinking $\E[Y|\doo(a)]$ is identified, when in fact, the front-door conditions are not met due to the $A\diedgeright Y$ edge. Finally, consider the ADMG in Fig.~\ref{fig:trivial-verma}(c). Distributions factorizing according to this graph satisfy $Z \ci Y \mid A, M,$ which violates assumption (A2). These distributions will also exhibit the same trivial Verma constraint, and here too, relying on such a test would lead to incorrect conclusions on identifiability. The following lemma confirms that under the given assumptions such trivial constraints do not arise.

\begin{lemma}
	\label{lem:non-trivial}
	Distributions $p(Z, A, M, Y)$ satisfying assumptions (A1) and (A2) do not entail any ordinary independence constraints between $Z$ and $Y.$
\end{lemma}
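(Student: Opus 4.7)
The plan is to argue at the level of ADMG structure: in any ADMG $\G$ on $\{Z, A, M, Y\}$ reflecting (A1) and (A2), no m-separation between $Z$ and $Y$ given $C \subseteq \{A, M\}$ holds, so no ordinary independence between $Z$ and $Y$ is entailed by the nested Markov model of $\G$. The case $C = \{A, M\}$ is immediate from the $Z \not\ci Y \mid A, M$ clause of (A2); the remaining three cases $C \in \{\emptyset, \{A\}, \{M\}\}$ will be dispatched by exhibiting explicit m-connecting paths.

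First I would extract three structural consequences of (A1)--(A2). By (A1), $\G$ contains a directed path from $A$ through $M$ to $Y$, which we may take to be $A \to M \to Y$. By $Z \not\ci A$ in (A2), combined with $Z \notin \de_\G(A)$ and acyclicity, $\G$ must contain a direct edge between $Z$ and $A$ with an arrowhead at $A$, namely $Z \to A$ or $Z \leftrightarrow A$: any $Z$--$A$ path routed through $M$ or $Y$ would have that intermediate as a collider (since $M \to A$ and $Y \to A$ would close a cycle with $A \to M \to Y$, and $M \to Z$ or $Y \to Z$ would put $Z$ in $\de_\G(A)$), hence would be blocked marginally. By $Z \not\ci Y \mid A, M$ in (A2), there is an m-connecting path $\pi$ from $Z$ to $Y$ given $\{A, M\}$, and since no intermediate variable lies outside $\{A, M\}$, every intermediate of $\pi$ must be a collider activated by $\{A, M\}$.

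For $C = \emptyset$, concatenating $Z \to A$ or $Z \leftrightarrow A$ with $A \to M \to Y$ yields a path on which both intermediates are non-colliders, hence m-connecting. For $C \in \{\{A\}, \{M\}\}$ I would case-split on $\pi$: if $\pi$ is a direct edge between $Z$ and $Y$, it is active under any conditioning; if $\pi$ has $A$ as its only intermediate collider, then $A \leftrightarrow Y$ must be in $\G$ (acyclicity rules out $A \leftarrow Y$), and $\pi$ is active given $\{A\}$ directly and given $\{M\}$ via $M \in \de_\G(A)$; if $\pi$ has $M$ as a collider, then $M \leftrightarrow Y$ and an arrowhead from $Z$ at $M$ are forced, so the alternate path $Z \to M \to Y$ or $Z \leftrightarrow M \to Y$ (using $M \to Y$ from (A1)) is active for $C = \{A\}$, while $\pi$ itself is active for $C = \{M\}$; the case where $\pi$ has both $A$ and $M$ as colliders (forcing $A \leftrightarrow M$ and $M \leftrightarrow Y$) is handled by analogous hybrid paths mixing the directed edge $A \to M$ with the bidirected edges.

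The main obstacle is the subcase in which $\pi$ contains $M$ as a collider: $\pi$ itself is inactive given $\{A\}$ alone, since $M$'s only descendant in $\G$ is the endpoint $Y$, so one must construct a genuinely different active path by combining the $Z$-to-$M$ edge forced by $\pi$ with the mediator edge $M \to Y$ from (A1). Verifying that (A1) and (A2) are jointly strong enough to supply such an alternative path in every admissible structural configuration is the crux of the case analysis.
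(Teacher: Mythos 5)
Your proposal is correct, but it diverges from the paper's argument at the decisive step. You share the same setup as the paper --- extracting $A \diedgeright M \diedgeright Y$ from (A1), a direct $Z\diedgeright A$ or $Z\biedge A$ edge from the relevance clause of (A2), and a collider path $\pi$ between $Z$ and $Y$ forced by $Z\not\ci Y\mid A,M$ --- but you then finish by enumerating the four candidate conditioning sets $C\subseteq\{A,M\}$ and exhibiting an explicit m-connecting path for each, constructing hybrid paths (e.g.\ $Z\circright A\biedge M\diedgeright Y$ mixing bidirected edges from $\pi$ with the mediating chain) in the awkward subcases where $\pi$ itself is blocked. The paper instead observes that $\pi$ is an \emph{inducing path}: every intermediate of $\pi$ is a collider lying in $\{A,M\}$, and by (A1) both $A$ and $M$ are causal ancestors of the endpoint $Y$, so the classical result of Verma and Pearl immediately yields that $Z$ and $Y$ cannot be m-separated by any conditioning set. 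The inducing-path argument dispatches all conditioning sets at once and entirely avoids the hybrid-path constructions that you correctly flag as the crux of your case analysis; your version is more elementary and self-contained (it does not invoke the inducing-path theorem), at the cost of length and of having to verify each structural configuration by hand. Both routes are valid, and your identification of the $M$-as-collider subcase under $C=\{A\}$ as the genuinely delicate point --- resolved by rerouting through the tail edge $M\diedgeright Y$ guaranteed by (A1) --- is exactly right.
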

\begin{proof}
	Given (A1), $p(Z, A, M, Y)$ must nested factorize wrt an ADMG, say $\G$, containing the path $A \diedgeright M \diedgeright Y.$ Given $Z \not\ci A$ and $Z$ is not a causal consequence of $A$ due to (A2), $\G$ also contains either $Z \diedgeright A$ or $Z\biedge A$ or both. This implies the existence of an unblocked path $Z\diedgeright A \diedgeright M \diedgeright Y$ and/or a similar one that starts with $Z \biedge A.$ These paths become blocked when we condition on $A$ and $M.$ However, since by assumption (A2), we have that $Z \not\ci Y | A, M$ at least one of the  following paths consisting of colliders exist in $\G$ (we use the notation $\circright$ below to mean a directed or bidirected edge):
	\begin{itemize}
		\item The edge $Z \circright Y$ exists in $\G$ (trivially a collider path.)
		\item A path $Z \circright A \biedge Y$ exists in $\G$ where $A$ is a collider.
		\item A path $Z \circright M \biedge Y$ exists in $\G$ where $M$ is a collider.
		\item A path $Z \circright A \biedge M \biedge Y$ and/or $Z \circright M \biedge A \biedge Y$ exists in $\G$ where both $A$ and $M$ are colliders.
	\end{itemize}
	
	Note that conditioning on a descendant of a collider also opens the collider; however, given just these 4 variables, this situation arises only when we condition on $M$ with $A$ being  the collider (a case which is already covered above.) An inducing path between two variables $P$ and $Q$ is defined as a path (comprised of any combination of directed and bidirected edges) between two variables such that every intermediate node is a collider and a causal ancestor of either $P$ or $Q.$ If such a path exists, then it is known that $P$ and $Q$ are not m-separable given any subset of other variables, i.e., there is no ordinary independence constraint between them \citep{verma1990equivalence}. Here, any of the above paths would form inducing paths ($Z\circright Y$ is trivially an inducing path), as by assumption (A1) both $A$ and $M$ are causal ancestors of $Y.$ Hence, there are no ordinary conditional independence constraints implied in distributions $p(Z, A, M, Y)$ that satisfy assumptions (A1-A3).
\end{proof}

\section{Exhaustive List of ADMGs Satisfying The Testable Criterion}
\label{app:list}

Fig.~\ref{fig:all-admgs-with-verma} is an exhaustive list of all ADMGs that satisfy the Verma restriction in Theorem~\ref{thm:fd}. Importantly, all of these ADMGs also satisfy the front-door criteria (or its extensions) that permit identification of $\E[Y|\doo(a)].$ The ADMGs in Fig.~\ref{fig:all-admgs-with-verma}(a-c) are ones where the anchor $Z$ also satisfies the instrumental variable conditions. A version of Fig.~\ref{fig:all-admgs-with-verma}  appears in \cite{shpitser2014introduction} as the conjectured list of ADMGs that satisfy a Verma restriction which distinguishes it from a supermodel where the $Z \diedgeright Y$ edge is present. This conjecture was based on empirical comparisons of Bayesian Information Criterion scores of all 4 variable ADMGs using a parameterization of the nested Markov model for discrete data \citep{evans2014markovian}.

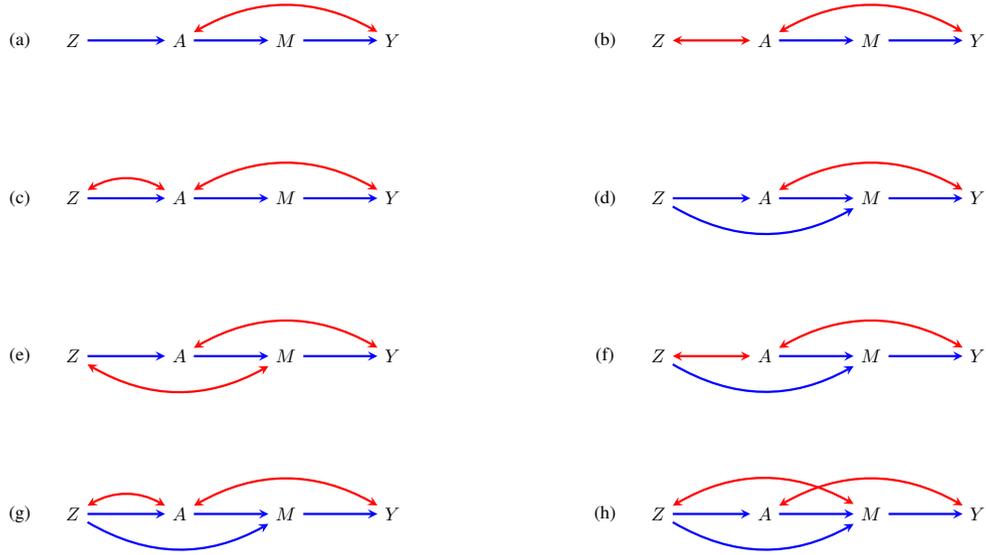
\begin{figure*}[h]
	\begin{center}
		\scalebox{0.7}{
			\begin{tikzpicture}[>=stealth, node distance=2cm]
				\tikzstyle{square} = [draw, thick, minimum size=1.0mm, inner sep=3pt]
				
				\begin{scope}[xshift=0cm]
					\path[->, very thick]
					node[] (label) {(a)}
					node[right of=label, xshift=-1cm] (z) {$Z$}
					node[right of=z] (a) {$A$} 
					node[right of=a] (m) {$M$}
					node[right of=m] (y) {$Y$}
					
					(z) edge[blue] (a)
					(a) edge[blue] (m)
					(m) edge[blue] (y)
					(a) edge[<->, red, bend left] (y)
					;
				\end{scope}
				
				\begin{scope}[xshift=11cm]
					\path[->, very thick]
					node[] (label) {(b)}
					node[right of=label, xshift=-1cm] (z) {$Z$}
					node[right of=z] (a) {$A$} 
					node[right of=a] (m) {$M$}
					node[right of=m] (y) {$Y$}
					
					(z) edge[<->, red] (a)
					(a) edge[blue] (m)
					(m) edge[blue] (y)
					(a) edge[<->, red, bend left] (y)
					;
				\end{scope}
				
				\begin{scope}[xshift=0cm, yshift=-3cm]
					\path[->, very thick]
					node[] (label) {(c)}
					node[right of=label, xshift=-1cm] (z) {$Z$}
					node[right of=z] (a) {$A$} 
					node[right of=a] (m) {$M$}
					node[right of=m] (y) {$Y$}
					
					(z) edge[<->, red, bend left] (a)
					(z) edge[blue] (a)
					(a) edge[blue] (m)
					(m) edge[blue] (y)
					(a) edge[<->, red, bend left] (y)
					;
				\end{scope}
				
				\begin{scope}[xshift=11cm, yshift=-3cm]
					\path[->, very thick]
					node[] (label) {(d)}
					node[right of=label, xshift=-1cm] (z) {$Z$}
					node[right of=z] (a) {$A$} 
					node[right of=a] (m) {$M$}
					node[right of=m] (y) {$Y$}
					
					(z) edge[blue] (a)
					(z) edge[blue, bend right] (m)
					(a) edge[blue] (m)
					(m) edge[blue] (y)
					(a) edge[<->, red, bend left] (y)
					;
				\end{scope}
				
				\begin{scope}[xshift=0cm, yshift=-6cm]
					\path[->, very thick]
					node[] (label) {(e)}
					node[right of=label, xshift=-1cm] (z) {$Z$}
					node[right of=z] (a) {$A$} 
					node[right of=a] (m) {$M$}
					node[right of=m] (y) {$Y$}
					
					(z) edge[blue] (a)
					(z) edge[red, <->, bend right] (m)
					(a) edge[blue] (m)
					(m) edge[blue] (y)
					(a) edge[<->, red, bend left] (y)
					;
				\end{scope}
				
				\begin{scope}[xshift=11cm, yshift=-6cm]
					\path[->, very thick]
					node[] (label) {(f)}
					node[right of=label, xshift=-1cm] (z) {$Z$}
					node[right of=z] (a) {$A$} 
					node[right of=a] (m) {$M$}
					node[right of=m] (y) {$Y$}
					
					(z) edge[red, <->] (a)
					(z) edge[blue, bend right] (m)
					(a) edge[blue] (m)
					(m) edge[blue] (y)
					(a) edge[<->, red, bend left] (y)
					;
				\end{scope}
				
				\begin{scope}[xshift=0cm, yshift=-9cm]
					\path[->, very thick]
					node[] (label) {(g)}
					node[right of=label, xshift=-1cm] (z) {$Z$}
					node[right of=z] (a) {$A$} 
					node[right of=a] (m) {$M$}
					node[right of=m] (y) {$Y$}
					
					(z) edge[blue] (a)
					(z) edge[red, <->, bend left] (a)
					(z) edge[blue, bend right] (m)
					(a) edge[blue] (m)
					(m) edge[blue] (y)
					(a) edge[<->, red, bend left] (y)
					;
				\end{scope}
				
				\begin{scope}[xshift=11cm, yshift=-9cm]
					\path[->, very thick]
					node[] (label) {(h)}
					node[right of=label, xshift=-1cm] (z) {$Z$}
					node[right of=z] (a) {$A$} 
					node[right of=a] (m) {$M$}
					node[right of=m] (y) {$Y$}
					
					(z) edge[blue] (a)
					(z) edge[red, <->, bend left] (m)
					(z) edge[blue, bend right] (m)
					(a) edge[blue] (m)
					(m) edge[blue] (y)
					(a) edge[<->, red, bend left] (y)
					;
				\end{scope}

			\end{tikzpicture}
		}
	\end{center}
	\caption{An exhaustive list of all ADMGs that satisfy the Verma constraint $Z \ci Y$ in $p(Z, A, M, Y)/p(M|A, Z)$.}
	\label{fig:all-admgs-with-verma}
\end{figure*}

\section{Additional Details On Non-Parametric Tests}
\label{app:np-tests}

In this section we provide  additional details on how a non-parametric Verma test can be performed using our methods. Assume we are given a data set ${\cal S}_n$ with $n$ i.i.d samples of data, and any appropriate non-parametric test $\tau(Z, Y, M)$ that can be used to test an ordinary conditional independence $Z \ci Y \mid M.$ The idea for a non-parametric Verma test is simple: Verma constraints resemble ordinary conditional independencies albeit in identified post-intervention distributions. Thus, if we are able to generate a pseudo-dataset ${\cal S}^p_{(\cdot)}$ that resembles the post-intervention distribution in which the desired dormant conditional independence holds, then $\tau(Z, Y, M)$ can easily be applied to ${\cal S}^p_{(\cdot)}$ rather than ${\cal S}_n.$ More concretely, to use the dual weights for a non-parametric Verma test we would proceed as follows:

\begin{enumerate}
	\setlength{\itemsep}{0.25cm}
	\item Compute $q^d(M|A, Z) \equiv p(M|A, Z)/p(M|A=a, Z)$ for each row of data $i=1, \dots, n$ using suitable machine learning methods, e.g., kernel regressions or random forests.
	\item Create a pseudo-dataset ${\cal S}^p_{n/2}$ that mimics the post-intervention distribution $p(Z, M, Y \mid \doo(a))$ by drawing $n/2$ samples with replacement from ${\cal S}_n$, where each row $i$ has an (unnormalized) probability of being sampled given by the dual weights $1/q^d(M=m_i|A=a_i, Z=z_i).$
	\item Apply $\tau(Z, Y, M)$ to ${\cal S}^p_{n/2}$ with some pre-specified significance level $\alpha.$
\end{enumerate}

Note that some of the statistical inefficiency in our tests come from resampling only half the data from the original dataset when performing the Verma test. This is the simplest sampling scheme proposed by \cite{thams2021statistical},  but the authors have also proposed more complex adaptive schemes  that may improve performance; see their paper for details. A non-parametric test using primal weights would proceed in exactly the same way except we would fit the appropriate models to generate the primal weights $1/\widetilde{q}(A \mid Y, Z, M).$

\clearpage

\section{Models With Baseline Covariates}
\label{app:baseline}

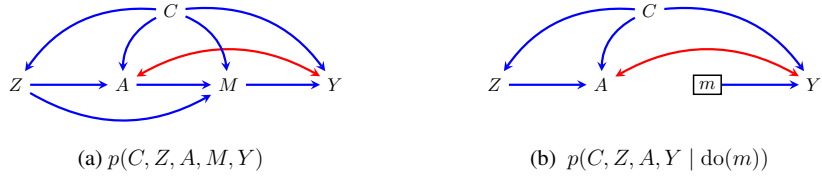
\begin{figure}[h]
	\begin{center}
		\scalebox{0.7}{
			\begin{tikzpicture}[>=stealth, node distance=2cm]
				\tikzstyle{square} = [draw, thick, minimum size=1.0mm, inner sep=3pt]
				
				\begin{scope}[xshift=0cm]
					\path[->, very thick]
					node[] (z) {$Z$}
					node[right of=z] (a) {$A$} 
					node[right of=a] (m) {$M$}
					node[right of=m] (y) {$Y$}
					node[above right of=a, xshift=-0.5cm] (c) {$C$}
					node[below right of=a, xshift=-0.5cm] (label) {\large (a)\  $p(C, Z, A, M, Y)$}
					
					(z) edge[blue] (a)
					(a) edge[blue] (m)
					(m) edge[blue] (y)
					(z) edge[blue, bend right] (m)
					(a) edge[red, <->, bend left] (y) 
					(c) edge[blue, bend right] (z)
					(c) edge[blue, bend right] (a)
					(c) edge[blue, bend left] (m)
					(c) edge[blue, bend left] (y)
					;
				\end{scope}
				
				\begin{scope}[xshift=9cm]
					\path[->, very thick]
					node[] (z) {$Z$}
					node[right of=z] (a) {$A$} 
					node[right of=a, square] (m) {$m$}
					node[right of=m] (y) {$Y$}
					node[above right of=a, xshift=-0.5cm] (c) {$C$}
					
					(z) edge[blue] (a)
					(m) edge[blue] (y)
					(a) edge[red, <->, bend left] (y) 
					(c) edge[blue, bend right] (z)
					(c) edge[blue, bend right] (a)
					(c) edge[blue, bend left] (y)
					node[below right of=a, xshift=-0.5cm] (label) {\large (b) \ $p(C, Z, A, Y\mid \doo(m))$}
					;
				\end{scope}
				
			\end{tikzpicture}
			
		}
	\end{center}
	\vspace{-0.25cm}
	\caption{ (a) A modification of the model in Fig.~\ref{fig:intro}(b) to include baseline covariates $C;$ (b) CADMG showing that the Verma constraint still exists as a conditional independence $Z \ci Y \mid C$ in $p(C, Z, A, Y \mid \doo(m)).$}
	\label{fig:baseline-covariates}
\end{figure}

In this section we make explicit how our framework extends to settings with baseline  covariates. Consider the ADMG in Fig.~\ref{fig:baseline-covariates}(a) that is a modification of Fig.~\ref{fig:intro}(b) to include baseline covariates $C$ that point to all other variables $Z, A, M, Y.$ We show that the inclusion of additional covariates requires only small modifications to the underlying theory  in the main paper.

First, notice that there is still no ordinary conditional independence between $Z$ and $Y$ implied by Fig.~\ref{fig:baseline-covariates}(a). Further, the post-intervention distribution $p(C, Z, A, Y \mid \doo(m))$ is identified as $p(C, Z, A, Y, M=m) / p(M=m|A, Z, C).$ This post-intervention distribution factorizes according to the CADMG shown in Fig.~\ref{fig:baseline-covariates}(b) from which we can read off the dormant conditional independence $Z \ci Y \mid C$ in $p(C, Z, A, Y \mid \doo(m)).$ That is, the Verma constraint between the anchor and the outcome is now a \emph{conditional} rather than marginal independence in the post-intervention distribution, and the distribution itself is obtained using a propensity score for $M$ that  includes $C$ in the conditioning set. The identifying functional for the post-intervention distribution where we intervene on $A$ is also very similar: $p(C, Z, M, Y \mid \doo(a)) = p(C, Z) \times p(M | A=a, Z, C) \times \sum_A p(A| C, Z)\times p(Y|Z, A, M, C).$

Based on the above observations, the modifications to the theory are as follows. Assumptions (A1) and (A3) remain the same, however, the relevance assumption (A2) now includes $C$ in the conditioning set (similar to the relevance assumption in conditional IV models.) That is, (A2) is modified to read: $Z$ is a covariate that is \emph{not} a causal consequence of $A$ such that $Z\not\ci A \mid C$ and $Z \not\ci Y \mid A, M, C.$ As mentioned earlier, the Verma constraint that allows us to test that $A$ has no bidirected path to its children is then the dormant \emph{conditional} independence $Z \ci Y \mid C$ in $p(Z, A, M, Y, C)/p(M | A, Z, C).$ The tests proposed in Section~\ref{sec:finite-sample} are modified appropriately by defining $\widetilde{q}(A|Y, Z, M, C)$ and $q^d(M|A, Z, C)$ as:
\begin{align*}
	\widetilde{q}(A \mid Y, Z, M, C) &\equiv \frac{p(A\mid Z, C)\times p(Y \mid A, M, Z, C)}{\sum_A p(A\mid Z, C)\times p(Y \mid A, M, Z, C)} \\
	q^d(M \mid A, Z, C) &\equiv \frac{p(M \mid A, Z, C)}{p(M \mid A=a, Z, C)}.
\end{align*}
The primal and dual weights are then used to test $Z \ci Y \mid M, C$ in $p(C, Z, A, M, Y)/\widetilde{q}(A|Y, Z, M, C)$ and $Z \ci Y \mid M, C$ in $p(C, Z, A, M, Y)/q^d(M|A, Z, C)$. The same weights can be re-used for causal effect estimation as before.

Hence, the proposed framework does not exclude the possibility of including additional baseline covariates. Similar to conditional IV models, the inclusion of such covariates can be beneficial from an identification standpoint by reducing the possibility of unmeasured confounding on the causal path from $A \diedgeright M \diedgeright Y.$ The addition of these covariates may also yield additional statistical efficiency in the test and downstream causal effect estimation.

\clearpage
\section{Nested Markov Factorization of ADMGs}
\label{app:nested}

The nested Markov factorization of $p(V)$ with respect to an ADMG $\G(V)$ is defined using conditional ADMGs (CADMGs) derived from $\G(V)$ and kernel objects derived from $p(V)$ via a \textit{fixing} operation \citep{richardson2017nested}. The fixing operation can be causally interpreted as an application of the g-formula on a single variable to a given graph-kernel pair to obtain another graph-kernel pair corresponding to a conditional ADMG and a post-intervention distribution that factorizes according to it. We define the relevant concepts below.

\emph{Conditional ADMG (CADMG)} -- A CADMG $\G(V,W)$ is an ADMG whose vertices can be partitioned into random variables $V$ and fixed/intervened variables $W.$ Only outgoing directed edges may be adjacent to variables in $W.$ For any random variable $V_i \in V$ in a CADMG $\G(V, W)$, the usual definitions of genealogical relations, such as parents and descendants, extend naturally by allowing for the inclusion of fixed variables into these sets. However, bidirected connected components a.k.a districts of a CADMG ${\cal G}(V,W)$ are only defined for elements of $V$.

\emph{Kernel} -- A kernel  $q_V(V \mid W)$ is a mapping from values in $W$ to normalized densities over $V$. That is, a kernel acts in most respects like an ordinary conditional distribution. In particular, given a kernel $q_V(V \mid W)$ and a subset ${X}\subseteq {V}$,  conditioning and marginalization are defined in the usual way as $q_X({X} \mid {W}) \equiv \sum_{{V} \setminus {X}} \ q_V(V \mid W)$ and $q_V({V}\setminus {X} \mid {X}, {W}) \equiv {q_V({V} \mid {W})}/{q_V({X} \mid {W})}$. 

\emph{Fixing operation for a single variable} -- Fixability of a single variable and the graphical and probabilistic operations of fixing it are defined as follows: 
\begin{itemize}
	\setlength{\itemsep}{0.25cm} 
	
	\item \emph{Fixability} --  A vertex $V_i \in V$ is said to be \emph{fixable} in $\G(V,W)$ if  $V_i \rightarrow \ldots \rightarrow V_j$ and $V_i \leftrightarrow \ldots \leftrightarrow V_j$ do not both exist in $\G$, for any $V_j \in V \setminus V_i$. 
	
	\item \emph{Graphical operation of fixing} -- 
	The graphical operation of fixing $V_i,$ denoted by $\phi_{V_i}(\G),$ yields a new CADMG $\G(V \setminus V_i, W \cup V_i)$ where all edges with arrowheads into $V_i$ are removed, and $V_i$ is fixed to a particular value $v_i.$  All other edges in $\G$ are kept the same. 
	
	\item  \emph{Probabilistic operation of fixing} -- 
	Given a kernel $q_V(V\mid W)$, the associated CADMG $\G(V,W),$ and $V_i \in V$, the corresponding probabilistic operation of fixing $V_i$, denoted by $\phi_{V_i}(q_V;\G),$ yields a new kernel defined as follows:
	\begin{align}
		\phi_{V_i}(q_V; \G) \equiv q_{V\setminus V_i}(V \setminus V_i \mid W \cup V_i) \equiv \frac{q_{V}(V \mid W)}{q_V(V_i \mid \mb_\G(V_i), W)},
		\label{eq:ordinary_fixing} 
	\end{align}
	where $\mb_\G(V_i)$ denotes the Markov blanket of $V_i$, which consists of all vertices in the district of $V_i$ and the parents of the district of $V_i$ (excluding $V_i$ itself.)
\end{itemize}  

\emph{Fixing operation for a set of variables} -- The above definitions can be extended to a set of vertices $S$ as follows: 
\begin{itemize}
	\setlength{\itemsep}{0.25cm} 
	
	\item \emph{Fixability} -- A set $S \subseteq V$ is said to be fixable if there exists an ordering $(S_1, \dots, S_p)$ such that $S_1$ is fixable in $\G,$ $S_2$ is fixable in $\phi_{S_1}(\G),$ and so on. Such an ordering is said to form a valid fixing sequence for $S$ and we denote it by $\sigma_S$. 
	\item \emph{Graphical operation} -- It is known that applying any two valid fixing sequences on $S$ yield the same CADMG, so we denote this by $\phi_S(\G(V,W)).$  
	\item \emph{Probabilistic operation} -- $\phi_{\sigma_{ S}}(q_V; \G)$ is defined via the usual function composition to yield operators that fix all elements in ${S}$ in the order given by $\sigma_{S}$.
\end{itemize}

%

\emph{Intrinsic set} -- A set $D$ is said to be \emph{intrinsic} in $\G(V)$ if $V\setminus D$ is fixable in $\G$\footnote{That is, from a causal perspective, $q_D(D\mid \pa_\G(D)))$ is identified from $p(V)$ via sequential applications of the g-formula.} and $\phi_{V\setminus D}(\G)$ contains a single district.

Finally, a distribution $p(V)$ is said to satisfy the nested Markov factorization wrt an ADMG $\G(V)$ if there exists a set of kernels $q_D(D \mid \pa_\G(D))$, one for every $D$ intrinsic in $\G(V)$, s.t. for every fixable set $S$ and every valid fixing sequence $\sigma_S,$
\begin{align}
	\phi_{\sigma_S}(p(V);\G) = \prod_{D \in {\cal D}(\phi_S(\G))} q_D(D \mid \pa_{\G}(D)),  
	\label{eq:nested_Markov_model}
\end{align}%
where ${\cal D}(\phi_S(\G))$ denotes the set of all districts in the CADMG $\phi_S(\G).$ The nested Markov factorization asserts that every kernel that can be derived via a valid sequence of fixing satisfies the district factorization with respect to the CADMG obtained by this sequence, and each of the kernels appearing in the factorization corresponds to intrinsic sets. 

\clearpage

\section{Details On Simulated Data}
\label{app:sims}

\begin{figure}[h]
	\begin{center}
		\scalebox{0.6}{
			\begin{tikzpicture}[>=stealth, node distance=1.9cm]
				\tikzstyle{square} = [draw, thick, minimum size=1.0mm, inner sep=3pt]
				
				\begin{scope}[xshift=0cm]
					\path[->, very thick]
					node[] (z) {$Z$}
					node[right of=z] (a) {$A$} 
					node[right of=a] (m) {$M$}
					node[right of=m] (y) {$Y$}
					node[above of=m, yshift=-0.75cm, opacity=0.5] (u12) {$U_{1, 2}$}
					
					(z) edge[blue] (a)
					(a) edge[blue] (m)
					(m) edge[blue] (y)
					(z) edge[blue, bend right] (m)
					(a) edge[red, <->, bend left] (y) 
					(u12) edge[blue, opacity=0.4, bend right] (a)
					(u12) edge[blue, opacity=0.4, bend left] (y)
					
					node[below of=a, xshift=1cm, yshift=0cm] (b) {(a)} ;
				\end{scope}
				
				\begin{scope}[xshift=7cm]
					\path[->, very thick]
					node[] (z) {$Z$}
					node[right of=z] (a) {$A$} 
					node[right of=a] (m) {$M$}
					node[right of=m] (y) {$Y$}
					node[below of=a, yshift=0.75cm, opacity=0.5] (u34) {$U_{3, 4}$}
					node[above of=m, yshift=-0.75cm, opacity=0.5] (u12) {$U_{1, 2}$}
					
					(z) edge[blue] (a)
					(a) edge[blue] (m)
					(m) edge[blue] (y)
					(z) edge[red, <->, bend right] (m)
					(a) edge[red, <->, bend left] (y) 
					(u12) edge[blue, opacity=0.4, bend right] (a)
					(u12) edge[blue, opacity=0.4, bend left] (y)
					(u34) edge[blue, opacity=0.4, bend left] (z)
					(u34) edge[blue, opacity=0.4, bend right] (m)
					
					node[below of=a, xshift=1cm, yshift=0cm] (b) {(b)} ;
				\end{scope}

				\begin{scope}[xshift=14cm]
					\path[->, very thick]
					node[] (z) {$Z$}
					node[right of=z] (a) {$A$} 
					node[right of=a] (m) {$M$}
					node[right of=m] (y) {$Y$}
					node[above of=m, yshift=-0.75cm, opacity=0.5] (u12) {$U_{1, 2}$}
					
					(z) edge[blue] (a)
					(a) edge[blue] (m)
					(m) edge[blue] (y)
					(z) edge[blue, bend right=55] (m)
					(a) edge[red, <->, bend left] (y) 
					(u12) edge[blue, opacity=0.4, bend right] (a)
					(u12) edge[blue, opacity=0.4, bend left] (y)
					(u12) edge[blue, opacity=0.4] (m)
					(a) edge[red, bend right, dashed, <->] (m)
					(m) edge[red, bend right, dashed, <->] (y)
					
					node[below of=a, xshift=1cm, yshift=0cm] (b) {(c)} ;
				\end{scope}
				
				\begin{scope}[xshift=21cm]
					\path[->, very thick]
					node[] (z) {$Z$}
					node[right of=z] (a) {$A$} 
					node[right of=a] (m) {$M$}
					node[right of=m] (y) {$Y$}
					node[above of=m, yshift=-0.75cm, opacity=0.5] (u12) {$U_{1, 2}$}
					
					(z) edge[blue] (a)
					(a) edge[blue] (m)
					(m) edge[blue] (y)
					(z) edge[blue, bend right] (m)
					(a) edge[red, <->, bend left] (y) 
					(u12) edge[blue, opacity=0.4, bend right] (a)
					(u12) edge[blue, opacity=0.4, bend left] (y)
					(a) edge[blue, bend right, dashed] (y)
					
					node[below of=a, xshift=1cm, yshift=0cm] (b) {(d)} ;
				\end{scope}
				
			\end{tikzpicture}
		}
	\end{center}
	\vspace{-0.5cm}
	\caption{ Data for the simulations are generated according to: (a, b) Two hidden variable causal DAGs and the corresponding ADMGs that satisfy the front-door assumptions; (c, d) Two hidden variable causal DAGs and the corresponding ADMGs that do not satisfy the front-door assumptions due to additional confounding and a direct effect of $A$ on $Y$ respectively. }
	\label{fig:sims-parametric-verma}
\end{figure}
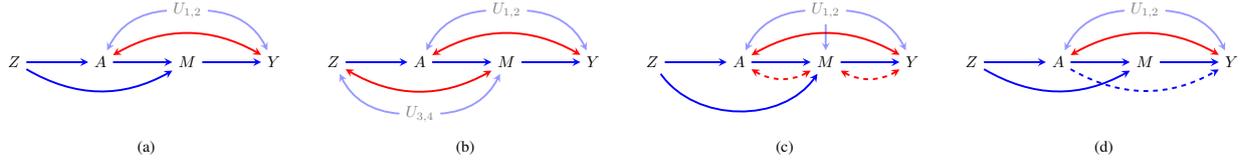

The causal graphs used to perform experiments in Task (i) of Section~\ref{sec:experiments} are shown in Fig.~\ref{fig:sims-parametric-verma} -- underlying hidden variables are depicted with lower opacity to highlight the structure of the ADMG. The first two graphs are ones in which $A$ has no bidirected path to its children and the Verma constraint $Z \ci Y$ in $p(Z, A, Y \mid \doo(m))$ holds; the latter two are ones in which there is additional confounding or violation of the exclusion restriction, which also coincide with the absence of any non-parametric equality constraint between $Z$ and $Y.$ 
We first describe generating data according to the hidden variable causal model in Fig.~\ref{fig:sims-parametric-verma}(a) whose observed margin would nested factorize wrt to the ADMG shown in the same figure. Each coefficient $\beta_{(\cdot)}$ appearing in any of the equations below is generated randomly from a uniform distribution $\text{Uniform}(1, 2).$
\begin{align*}
	&U_1 = \text{Uniform}(-1, 1) \\
	&p(U_2=1) = \text{expit}(0.5) \\
	& p(Z=1) = \text{expit}(0.5) \\
	& p(A=1 \mid Z, U_1, U_2) = \text{expit}(-0.5 + \beta_{ZA}Z - \beta_{U_1A}U_1 + \beta_{U_2A}U_2) \\
	& p(M=1 \mid Z, A) = \text{expit}(-0.5 -  \beta_{ZM}Z + \beta_{AM}A)\\
	& Y \mid U_1, U_2, M = \beta_{U_1Y} U_1 + \beta_{U_2Y}U_2 - \beta_{MY}M + \text{Normal}(0, 1).
\end{align*}
Similarly, data generation for Fig.~\ref{fig:sims-parametric-verma} (b) is performed as follows:
\begin{align*}
	&U_1 = \text{Uniform}(-1, 1) \\
	&U_3= \text{Uniform}(-1, 1) \\
	&p(U_2=1) = \text{expit}(0.5) \\
	&p(U_4=1) = \text{expit}(0.5) \\
	&p(Z=1 \mid U_3, U_4) = \text{expit}(0.5 + \beta_{U_3Z}U_3 - \beta_{U_4Z}U_4) \\
	&p(A=1 \mid Z, U_1, U_2) = \text{expit}(-0.5 + \beta_{ZA}Z - \beta_{U_1A}U_1 + \beta_{U_2A}U_2) \\
	&p(M=1 \mid U_3, U_4, A) = \text{expit}(-0.5 - \beta_{U_3M}U_3 + \beta_{U_4M}U_4 + \beta_{AM}A) \\
	&Y \mid U_1, U_2, M = \beta_{U_1Y} U_1 + \beta_{U_2Y}U_2 - \beta_{MY}M + \text{Normal}(0, 1).
\end{align*}
Data generation for Fig.~\ref{fig:sims-parametric-verma}(c) is similar to Fig.~\ref{fig:sims-parametric-verma}(a) except the equation for $M$ is modified to include $U_{1, 2}$ as follows,
\begin{align*}
	p(M=1 \mid Z, A, U_1, U_2) = \text{expit}(-0.5 - \beta_{ZM}Z + \beta_{AM}A + \beta_{U_1M}U_1 - \beta_{U_2M}U_2).
\end{align*}
Data generation for Fig.~\ref{fig:sims-parametric-verma}(d) is also similar to Fig.~\ref{fig:sims-parametric-verma}(a) except the equation for $Y$ is modified to include $A$ as follows, 
\begin{align*}
	Y \mid U_1, U_2, M, A = \beta_{U_1Y}U_1 + \beta_{U_2Y}U_2  - \beta_{MY}M - \beta_{AY}A +  \text{Normal}(0, 1).
\end{align*}
For non-linear data generating processes, the equations for $M$ are modified in all the graphs in Fig.~\ref{fig:sims-parametric-verma} to add an interaction term between $Z$ and $A.$ Since we are using the dual weights estimated via machine learning techniques to perform the front-door test and compute the causal effect, it suffices to make this relation non-linear -- non-linearity in other portions of the data generating process would not affect the computation of the dual weights due to variational independence.

Finally, for experiments comparing IV and front-door estimation, the graph used to generate data in a manner that satisfies the front-door but not IV assumptions is Fig.~\ref{fig:sims-parametric-verma}(a), and so the data generating mechanism remains the same. To generate data for a graph satisfying both assumptions, we use a subgraph where the $Z \diedgeright M$ is absent corresponding to dropping $Z$ from the equation for $M$ as follows: $	p(M=1 \mid A) = \text{expit}(-1 + \beta_{AM}A).$

\clearpage

\section{Proofs} 
\label{app:proofs} 

{\bf Theorem~\ref{thm:fd}: }
\begin{proof}
	Per Lemma~\ref{lem:non-trivial} if $p(Z, A, M, Y)$ satisfies assumptions (A1-A3), the Verma constraint is non-trivial in the following sense: There is no ordinary independence constraint  between $Z$ and $Y$ in the observed data distribution and  the independence $Z \ci Y$ arises only in the post-intervention distribution $p(Z, A, Y | \doo(m)).$  Under Verma faithfulness, distributions that satisfy this constraint must nested Markov factorize wrt an ADMG that permits identification of $p(Z, A, Y | \doo(m)),$ and where $Z$ and $Y$ are m-separated in the resulting CADMG obtained by removal of edges into $M.$ Clearly $Z$ and $Y$ are not adjacent (via a directed or bidirected edge) in any such ADMG $\G$. We now show that the existence of any structure such that $A$ has a bidirected path to one of its children would contradict the aforementioned conditions.
	
	From results in \cite{tian2002general}, we know that $p(Z, A, Y \mid \doo(m))$ is identified from an observed data distribution nested Markov relative to $\G$ if and only if there is no bidirected path from $M$ to $Y$ in $\G.$ Given the existence of $A \diedgeright M \diedgeright Y$ due to assumption (A1), the presence of $M \biedge Y$ in $\G$ then contradicts identifiability of $p(Z, A, Y | \doo(m)).$ From the discussion in Appendix~\ref{app:verma-comms}, the relevance assumption is not satisfied if only $A \biedge M$ exists in $\G$ (see Fig.~\ref{fig:trivial-verma}(c) for an example.) Hence, $A \biedge M$ must be paired with either $M \biedge Y,$ which we know contradicts identifiability, or $A \biedge Y.$ When paired with the latter we get the bidirected path $M \biedge A \biedge Y$ which also contradicts identifiability of $p(Z, A, Y | \doo(m)).$ Hence, in ADMGs encoding the Verma constraint, the only bidirected edge permitted between $A, M,$ and $Y$ is $A \biedge Y.$ In fact, $A\biedge Y$ must be present to satisfy assumption (A2). This is why the edge is compelled in all ADMGs in the pattern Fig.~\ref{fig:pattern-admgs}(a). 
	
	The presence of $A\diedgeright Y$ in $\G$ does not necessarily contradict identifiability of $p(Z, A, Y | \doo(m)).$ However, it  contradicts m-separability between $Z$ and $Y$ in the resulting CADMG due to the presence of the chain $Z\diedgeright A \diedgeright Y$ which must be blocked by conditioning on $A,$ but doing so opens a collider path $Z \circright A \biedge Y$ ($\circright$ depicts a directed edge or bidirected edge; at least one of these must be present in $\G$ due to the relevance assumption A2.) So far we have shown that if the Verma constraint holds, the corresponding ADMG $\G$ is one in which (i) $Z$ and $Y$ are not adjacent, (ii) the only bidirected edge present between the variables $A, M,$ and $Y$ is $A \biedge Y,$  and (iii) $A \diedgeright Y$ is not present.
	
	The only remaining possibility for the existence of a bidirected path from $A$ to its child $M$ then is if we have \emph{both} $Z\biedge A$ and $Z \biedge M$ in $\G.$ This too produces a contradiction, as these edges complete another bidirected path  $M \biedge Z \biedge A \biedge Y$ resulting in non-identifiability of $p(Z, A, Y | \doo(m)).$ Hence,  the distribution factorizes according to some ADMG $\G$ that has no bidirected path from $A$ to any of its children (and the choice of valid ADMGs are given by the pattern in Fig.~\ref{fig:pattern-admgs}(a).)

\end{proof}

{\bf Lemma~\ref{lem:pattern-id}: } 

\begin{proof}
	In any valid ADMG derived from pattern~\ref{fig:pattern-admgs}(a), we know that $A$ does not have any bidirected path to any of its children. Let $D_A$ represent the district of $A$ in any such ADMG. Per results in \cite{tian2002general}, the post-intervention distribution $p(V \setminus A | \doo(a))$ is identified as
	\begin{align*}
		p(V) \times \frac{\sum_A q_{D_A}(D_A \mid \pa_\G(D_A))}{q_{D_A}(D_A \mid \pa_\G(D_A))}.
	\end{align*}
	In our case, this implies $p(Z, M, Y | \doo(a)) = p(Z, A, M, Y)/\tilde{q}(A|Z, M, Y)\vert_{A=a},$ where $\tilde{q}(.) \equiv \frac{q_{D_A}(D_A \mid \pa_\G(D_A)}{\sum_A q_{D_A}(D_A \mid \pa_\G(D_A)}.$ 
	Hence, to show that the post-intervention distribution is identified by the same functional in any ADMG in Fig.~\ref{fig:pattern-admgs}(a), it suffices to show that $\frac{q_{D_A}(D_A \mid \pa_\G(D_A)}{\sum_A q_{D_A}(D_A \mid \pa_\G(D_A)}$ is the same in all such ADMGs. There are only two cases: the first where $D_A = \{Z, A, Y\}$ and second where $D_A=\{Z, Y\}$ ($M$ cannot be in $D_A$ by the pre-condition that $A$ has no bidirected path to its children.) In the first case, $q_{D_A}(D_A \mid \pa_\G(D_A)) \equiv p(Z, A, Y \mid \doo(m))  = p(Z)\times p(A\mid Z)\times p(Y\mid A, M, Z).$ Therefore,
	\begin{align*}
		\frac{q_{D_A}(D_A \mid \pa_\G(D_A))}{\sum_A q_{D_A}(D_A \mid \pa_\G(D))} = \frac{p(Z)\times p(A\mid Z)\times p(Y\mid A, M, Z)}{\sum_A p(Z)\times p(A\mid Z)\times p(Y\mid A, M, Z)} = \frac{p(A\mid Z)\times p(Y\mid A, M, Z)}{\sum_A p(A\mid Z)\times p(Y\mid A, M, Z)}.
	\end{align*}
	In the second case when $D_A = \{A, Y\}$ we have already seen in Section~\ref{sec:prelims} that $q_{D_A}(D_A \mid \pa_\G(D_A)) = p(A\mid Z) \times p(Y \mid A, M, Z).$ The result immediately follows that in both scenarios the conditional kernels are the same. That is, regardless of the specific edges incorporated from the pattern in Fig.~\ref{fig:pattern-admgs}(a), the functional for $p(Z, M, Y \mid \doo(a))$ and hence $\E[Y \mid \doo(a)]$ remains the same. 
\end{proof} 

\end{document}